\newcommand{\si}{\sigma_i}
\newcommand{\sj}{\sigma_j}
\newcommand{\xiimu}{\xi_i^\mu}
\newcommand{\xijmu}{\xi_j^\mu}
\newcommand{\summup}{\sum_{\mu=1}^{P}}
\newtheorem{lemma}{Lemma}
\newtheorem{thm}{Theorem}
\numberwithin{equation}{section}
\definecolor{light}{gray}{.9}
\def\be{\begin{equation}}
\def\ee{\end{equation}}
\def\bea{\begin{eqnarray}}
\def\eea{\end{eqnarray}}
\def\s{\sigma}
\def\a{\alpha}
\def\e{\varepsilon}
\def\epsilon{\e}
\def\b{\beta}
\newcommand{\meanv}[1]{\left\langle#1\right\rangle}
\newcommand{\nocontentsline}[3]{}
\newcommand{\tocless}[2]{\bgroup\let\addcontentsline=\nocontentsline#1{#2}\egroup}
\DeclareMathSymbol{\leqslant}{\mathalpha}{AMSa}{"36} 
\DeclareMathSymbol{\geqslant}{\mathalpha}{AMSa}{"3E} 
\DeclareMathSymbol{\eset}{\mathalpha}{AMSb}{"3F}     
\renewcommand{\leq}{\;\leqslant\;}                   
\title{Non-Convex Multi-species Hopfield models}
\date{\today}
\author{Elena Agliari}
\address{Elena Agliari: Dipartimento di Matematica, Sapienza Universit\`a di Roma, Italy.}
\email{elena.agliari@mat.uniroma1.it}
\author{Danila Migliozzi}
\address{Danila Migliozzi: Dipartimento di Matematica, Sapienza Universit\`a di Roma, Italy.}
\email{danila.migliozzi@yahoo.it}
\author{Daniele Tantari}
\address{Daniele Tantari: Scuola Normale Superiore, Pisa, Italy.}
\email{daniele.tantari@sns.it}
\begin{document}

\begin{abstract}
In this work we introduce a multi-species generalization of the Hopfield model for associative memory, where neurons are divided into groups and both inter-groups and intra-groups pair-wise interactions are considered, with different intensities. Thus, this system contains two of the main ingredients of modern Deep neural network architectures:  Hebbian interactions to store patterns of information and multiple layers coding different levels of correlations. The model is completely solvable in the low-load regime with a suitable generalization of the Hamilton-Jacobi technique, despite the Hamiltonian can be a non-definite quadratic form of the magnetizations. The family of multi-species Hopfield model includes, as special cases, the 3-layers Restricted Boltzmann Machine (RBM) with Gaussian hidden layer and the Bidirectional Associative Memory (BAM) model.
\end{abstract}

\maketitle

\section{Introduction}

Artificial Intelligence (AI) is nowadays playing a major role in our everyday lives and has grown extremely fast in the past few years, both in popularity and in scientific advances.
%
%
%
The rise over the past decade has been strongly correlated with the increased processing power of modern computers: the progressions have made complex computations more accessible and allowed for AI-models to train on data-sets so large and multifaceted it would not have been possible a few years ago. 

Neural networks are playing an important and renewed role in this trend mainly because of the success of Deep Learning \cite{Bengio,DL-book} in several applications, ranging from engineering and computer science to neuro and computational biology.  From a mathematical perspective a neural network consists of many simple, connected processors (called neurons), each associated to an activity state depending on the state of the other neurons and on a possible external stimulation. In the family of neural network models the Hopfield model plays a major role. It was introduced to mimic the ability of the brain to retrieve information previously stored (see e.g., \cite{AMIT,Ton, ags1, ags2}) but recently it has been investigated also because of its equivalence with the fundamental constituent of a Deep architecture, namely a two-layer neural network called Restricted Boltzmann Machine (RBM) \cite{hinton1, hinton2,RBM1, RBM2,BBCS}. 

This bridge is particularly interesting since, while a solid theoretical description does exist for the mechanisms behind the retrieval capabilities of the Hopfield model, a clear and exhaustive theoretical scaffold for the performances of  Deep Networks is still needed and most of the advances in technology are often achieved by specific home-made recipes, without a global and systematic understanding of the methodology. For example in \cite{BGST1,BGST2,remi, huang, huang2} it has been shown how the statistical mechanics analysis of the Hopfield model phase diagram allows answering some issues concerning both the way RBMs extract features from data and their efficiency in terms of the training-set size necessary for a good generalization.  

The heart  of the Hopfield model is the particular shape of the interaction between neurons that follows the well-known Hebbian rule \cite{Hebb}.  The structural property of a deep architecture is the presence of several layers or groups of units (neurons) where only neurons belonging to two consecutive layers are connected: the aim of each layer is to codify more and more abstract levels of correlations.  In this work we consider an extension of the standard Hopfield model, with multiple groups of neurons  and different Hebbian-like interactions for each couple of groups. The multi-species model considered in this paper contains also,  as special cases corresponding to particular choices of the parameters,  the 3-layers RBM and the Bidirectional Associative Memory (BAM) model. The latter was introduced \cite{kosko1988bidirectional,kurchan1994statistical,englisch1995bam} to mimic the ability of the human brain to retrieve informations through  association of ideas.

Remarkably, in our model interactions between groups can be suitably tuned in such a way that inter-groups interaction may possibly result stronger than the intra-group one.  As a consequence, the model turns out to be in the family of non-convex multi-species systems. It is known that non-convexity can yield to strong difficulties in the statistical mechanics analysis, especially when dealing with disordered systems.  For example, the multi-species Sherrington Kirkpatrick (SK) model for spin glasses \cite{barra2015multi,Bip,barra2014mean}  still presents several mathematical challenges in the non-convex region, while it was completely explored in the convex regime \cite{barra2015multi, panchenko2015free}. Some properties of the multi-species SK model, in particular the overlaps synchronization,  have recently driven towards the investigation of a multipartite version of the Generalized Random Energy model \cite{genovsinc}. In the case of multipartite ferromagnets the analysis is relatively viable \cite{Contucci} and a suitable strategy based on convexification of the problem has been recently introduced \cite{genovese2016non} and used to compute the model free energy, the equilibrium states and the  thermodynamic fluctuations. Hebbian models are disordered systems but they are purely ferromagnetic in the low load regime, see Section \ref{sec:Hopfield}. For this reason in this paper we can generalize the strategy in \cite{genovese2016non} to find an exact solution of the multi-species Hopfield model in that regime.

The remaining of the paper is organized as follows.
In Sec.~\ref{sec:Hopfield} we review the basic Hopfield model, that is the starting point for our work.
In Sec.~\ref{sec:model} we introduce the multi-species Hopfield model and the main observables used for its investigation. We also give the main theorem concerning the variational principle for the free energy and the equations for the optimal parameters. In Sec.~\ref{sec:special} we discuss two particular cases of the model, corresponding to the BAM and the 3-layer RBM.
In Sec.~\ref{sec:solution} we present the proof of the main therorem.  Finally, Sec.~\ref{sec:conclusions} is left for conclusions and outlooks. Some technical details about the solution are collected in the Appendix.

\section{A brief review on the Hopfield model} \label{sec:Hopfield}
Before introducing the extended model considered in this work it is worth reviewing the standard Hopfield model (see also e.g., \cite{AMIT,Ton, article} for a more extensive explanation).

We consider $N$ binary neurons (i.e., Ising spins for statistical physicists \cite{AMIT} or McCulloch-Pitts neurons for computer scientists \cite{MC-P}) and to each neuron $i$ we assign a variable $\si$ that describes its activity: if $\si = +1$ the $i$-th neuron is spiking, while if $\si = -1$ the $i$-th neuron is quiescent. We denote with $\boldsymbol{\sigma} \in \{ -1, +1\}^N$ the overall configuration of the system.
\newline
Neurons are embedded on a fully-connected network in such a way that the internal field $h_i$ acting on the $i$-th neuron is given by
\begin{equation*}
 h_i (\boldsymbol{\sigma}) = \sum_{\substack{j=1 \\ j\neq i}}^N J_{ij} \sj,
\end{equation*}
where $\mathbf{J} = \{J_{ij}\}_{i,j=1,...,N}$ is the \emph{synaptic} coupling between neuron $j$ and neuron $i$. 
Inspired by neurophysiological mechanisms, one can introduce a dynamic rule for the spin configuration as follows: extract randomly and uniformly a spin $i$ and update its state according to 
\begin{equation} \label{eq:dinamica}
\si(t) \rightarrow \si(t+1)  =  \textrm{sgn} \left[ h_i(\boldsymbol{\sigma}(t)) + T z_i(t) \right],
\end{equation}
where $t$ is the discrete time unit, $z$ is a random variable\footnote{The random variable $z$ is chosen symmetrically distributed and, typically, its probability density is taken as $p(z) = [1- \tanh^2(z)]/2$.} and $T := 1/ \beta$ is a measure of the degree of noise within the system, usually referred to as temperature. 
As long as $\boldsymbol{J}$ is symmetric (i.e., $J_{ij}=J_{ji}, \forall i,j$) and devoid of self-interactions (i.e., $J_{ii} =0$, $\forall i$), this dynamic is ergodic and there exists an invariant measure given by 
\begin{equation} \label{eq:peq}
p_{\textrm{eq}} (\beta) = \frac{1}{Z_N(\beta  | \mathbf{J} )} e^{- \beta H_N(\boldsymbol{\sigma} | \mathbf{J})},
\end{equation}
where $Z_N(\beta  | \mathbf{J})$ is a normalization factor, also called partition function, and
\begin{equation} \label{eq:hamiltonian}
H_N(\boldsymbol{\sigma} | \mathbf{J}) = - \sum_{1 \leq i < j \leq N} \sigma_i J_{j} \sigma_j.
\end{equation}
In a statistical-mechanics context 
the distribution (\ref{eq:peq}) corresponds to the Boltzmann-Gibbs measure used to describe the canonical equilibrium of a system described by the Hamiltonian (\ref{eq:hamiltonian}).

Now, the goal of the system is to be able to recognize and retrieve a certain group of words, pixels, or, generically, patterns. A \emph{pattern} is defined as a sequence of random variables $\boldsymbol{\xi}^{\mu} = (\xi_1^{\mu}, \ldots, \xi_N^{\mu})$ with $\xi_i^{\mu} \in \{-1, +1\}$, $\forall i =1, ..., N$, and $\forall \mu=1, ..., P$, where the label $\mu$ distinguishes different patterns. In the following we shall assume the set $\left\{ \xi_i^\mu \right\}_{i,\mu}$ made of i.i.d. random variables such that 
\begin{equation} \label{orto}
\mathbb{P} (\xi_i^{\mu}=+1) = \mathbb{P}(\xi_i^{\mu}=-1) = 1/2, ~~ \forall i, \mu.
\end{equation}
More general patterns distribution can be investigated: correlated patterns \cite{correlated,ABDG,gutfreund1988neural},  diluted patterns, i.e. which may have zero entries \cite{prlnoi1,multi, immhigh, immmed, immlett,agliari2014multitasking}, Gaussian patterns \cite{BGG-JSP2010,barra2012glassy,bg,gaussSK,legendre}, weighted patterns \cite{anergy}. With such premises we can give a definition of \emph{retrieval}: we say that the system is able to retrieve the $\mu$-th pattern if, given a suitable starting point (i.e., a configuration belonging to the attraction basin of $\boldsymbol{\xi}^{\mu}$), the spin configuration $\boldsymbol{\s}$ converges to $\boldsymbol{\xi}^{\mu}$ under the dynamics (\ref{eq:dinamica}). 
 One can prove that a coupling matrix defined according to Hebb's learning rule \cite{Hebb} as
\begin{equation} \label{hebbrule}
 J_{ij} = \frac{1}{N} \summup \xiimu \xijmu \; 
\end{equation}
ensures the attractiveness of the $P$ patterns, as long as the noise $T$ and the  number of patterns $P$ are not too large. For instance, by taking $P$ finite (or, still, sublinear with respect to $N$) in the thermodynamic limit $N \rightarrow \infty$, one has retrieval capabilities as long as $T<1$.  A modification of the Hebb rule results in a deformation of the basins of attractions: this can be done for example overlaying  Hebb to another interaction structure as a diluted \cite{sompodil,wemmedil} or a hierarchical  \cite{prlnoi3, dysnn,dysjpa,dyspre} structure.

In order to describe the overall state of the system, one introduces the macroscopic observable $\mathbf{m}$, also called \emph{Mattis magnetization}, that is a vector of length $P$, whose $\mu$-th component represents the overlap between the spin configuration and the $\mu$-th pattern:
\begin{equation} \label{eq:Mattis}
m_{\mu} (\boldsymbol{\s} | \boldsymbol{\xi}) := m_{\mu} = \frac{1}{N} \sum_{i=1}^N \xi_i^{\mu} \sigma_i ~~~ \in [-1,1].
\end{equation}
Notice that $m_{\mu}$ can also be written as $m_{\mu} = 2 d_H (\boldsymbol{\s} , \boldsymbol{\xi}) - 1$, where $d_H(x,y)$ is the Hamming distance between the strings $x$ and $y$. The Mattis magnetization plays as the \textit{order parameter} for the system as a certain arrangement for $\mathbf{m} = (m_1, m_2,..., m_P)$ can be associated to a global state for the system; in particular, the retrieval of a pattern $\mu$ corresponds to $m_{\mu} \neq 0$. It is worth underlying that, in the thermodynamic limit, (\ref{orto}) yields to mutually orthogonal patterns in such a way that only one pattern at a time can be retrieved exactly, i.e. $m_{\mu}= 1$.
Now, combining (\ref{eq:hamiltonian}), (\ref{hebbrule}) and (\ref{eq:Mattis}) we can state that the Hamiltonian for the Hopfield model equipped with $N$ Ising neurons and $P$ patterns is defined as
\begin{equation} \label{HopfieldH}
 H_N (\boldsymbol{\sigma} |\boldsymbol{\xi} ) = - \frac{1}{N} \sum_{1 \leq i<j \leq N} \sum_{\mu=1}^P  \xiimu \xijmu \si\sj = - N \sum_{\mu=1}^P m_{\mu}^2 + \frac{P}{N}.
\end{equation}
One can also account for an external field $\mathbf{h} = (h_1, h_2, ..., h_P)$ which biases the neuron configurations toward the retrieval of the $\mu$-th pattern with a relative magnitude $h_{\mu}$, leading to
\begin{equation} \label{HopfieldHh}
 H_N (\boldsymbol{\sigma}|\boldsymbol{\xi} ,\mathbf{h}) = - N \sum_{\mu=1}^P m_{\mu}^2 - N \sum_{\mu=1}^P m_{\mu} h_{\mu},
\end{equation}
where we neglected the unnecessary constant term $P/N$. 
The partition function  $Z_N (\beta,\mathbf{h}|\boldsymbol{\xi})$ for the Hopfield network then reads as
\begin{equation} \label{PF}
Z_{N}(\beta,\mathbf{h} |\boldsymbol{\xi}) = 
\sum_{\boldsymbol{\sigma}} \exp \left [ \beta N \left ( \sum_{\mu=1}^P  m_{\mu}^2 + \sum_{\mu=1}^P h_{\mu} m_{\mu} \right ) \right].
\end{equation}
Hereafter, whenever suitable, we will drop the dependence on $\boldsymbol{\xi}$ and on $\mathbf{h}$ in order to lighten the notation.
We also define the expectation of the observable $O(\boldsymbol{\sigma})$, that is a function of the state $\boldsymbol{\sigma}$, as the average with respect to the equilibrium (or Boltzmann-Gibbs) distribution (\ref{eq:peq}):
\begin{equation}
\langle O \rangle_N =\displaystyle\sum_{\boldsymbol{\sigma}} O(\boldsymbol{\sigma}) p_{\textrm{eq}}(\boldsymbol{\sigma}) = \frac{1}{Z_N(\beta)} \displaystyle\sum_{\boldsymbol{\sigma}} O(\boldsymbol{\sigma})e^{-\beta H_N(\boldsymbol{\sigma} )}.
\label{mediaens}
\end{equation}
In particular, in the following, we shall be interested in the expectation $\langle \mathbf{m} \rangle_N$ 
which, in a statistical-mechanics framework, can be obtained by extremizing the intensive pressure (or, equivalently, the free energy\footnote{The intensive pressure, here denoted as $f_N(\beta)$, is strictly related to the (possibly more familiar) free energy $\tilde{f}_N(\beta)$ by $f_N(\beta) = - \beta \tilde{f}_N(\beta)$. Therefore, the existence and the uniqueness of $\tilde{f}_N(\beta)$ also ensure the existence and uniqueness of $f_N(\beta)$ and vice versa, while the positive convexity of $\tilde{f}_N(\beta)$ ensures the negative convexity of $f_N(\beta)$. As a result, the thermodynamic equilibrium can be detected as a minimum for the free energy or as a maximum for the pressure.}) of the model. More precisely, recalling that the intensive pressure at finite size $N$ is defined as 
\begin{equation} \label{effe}
f_N(\beta,\mathbf{h}|\boldsymbol{\xi}):= \frac{1}{N}\ln Z_N(\beta,\mathbf{h}|\boldsymbol{\xi}),
\end{equation} 
and denoting the thermodynamic limit as $f(\beta,\mathbf{h}|\boldsymbol{\xi}) =\lim_{N\to\infty} f_N(\beta,\mathbf{h}|\boldsymbol{\xi})$, it holds 
\be\label{derl}
\langle \mathbf{m} \rangle=\lim_{N\to\infty}\langle \mathbf{m} \rangle_N=\lim_{N\to\infty}\frac 1 \b \nabla_{\boldsymbol{\lambda}}f_N(\beta,\mathbf{h}+\boldsymbol{\lambda}|\boldsymbol{\xi})|_{\boldsymbol{\lambda}=0}
=\frac 1 \b \nabla_{\boldsymbol{\lambda}}f(\beta,\mathbf{h}+\boldsymbol{\lambda}|\boldsymbol{\xi})|_{\boldsymbol{\lambda}=0}.
\ee

For the Hopfield model (\ref{HopfieldHh}), in the low-load regime, the limiting free energy $f(\beta,\mathbf{h}|\boldsymbol{\xi})$ is proved to exist and to be selfaveraging over the patterns' noise. 
Moreover, one can write a.s. 
 \begin{equation}
 f(\beta,\mathbf{h}) =\displaystyle \lim_{N\to\infty} \frac{1}{N}\ln Z_N(\beta,\mathbf{h}|\boldsymbol{\xi})=\displaystyle\sup_{\textbf{M}} \left [ \ln 2+  \left \langle \ln\cosh (\beta \boldsymbol{\xi} ( \mathbf{M} + \mathbf{h} ) ) \right\rangle_{\boldsymbol{\xi}}- \frac{1}{2}\sum_{\mu=1}^P M_\mu^2  \right ],
 \label{presione}
 \end{equation}
%
%
where $\langle \ \cdot \ \rangle_{\boldsymbol{\xi}}$ means the average over the patterns $\boldsymbol{\xi}$.
By extremizing the previous expression for the pressure with respect to the trial magnetization $\textbf{M}$, we get the following \emph{self-consistent} equations
\begin{equation}
M_{\mu}= \langle\xi^{\mu} \tanh [ \beta \boldsymbol{\xi} ( \textbf{M} +\textbf{h}) ]\rangle_{\boldsymbol{\xi}}, ~~~ \mu=1, ..., P,
\label{gaiona}
\end{equation}
or, in vectorial notation,
\begin{equation}
\textbf{M}=\langle \boldsymbol{\xi} \tanh [\beta \boldsymbol{\xi} (\textbf{M} + \mathbf{h}) ]\rangle_{\boldsymbol{\xi}}.
\label{gaiaa}
\end{equation}
Using equations $(\ref{derl})$ and $(\ref{presione})$ it holds that  the solution of (\ref{gaiaa}) coincides with  $ \langle \mathbf{m} \rangle$ and gives the behavior of the magnetizations with respect to the system parameters. In particular, posing $\mathbf{h}=0$, one can see that as $T>1$ the only solution is $M_{\mu}=0$, $\forall \mu=1, ..., P$, that is, noise prevails and the system is not able to retrieve, while as $T<1$ the system exhibits a symmetry breaking: equations (\ref{gaiaa}) have multiple solutions, each one related to a different way of taking the limit $\mathbf{h}\to0$ and there are solutions with  (at least) one  non-null component of the magnetization, where the corresponding pattern is retrieved. At $T=1$ the system exhibits a continuous phase transition.

The phase diagram becomes richer at high load (i.e, when $P$ grows linearly with $N$) and the equation $(\ref{presione})$ does not hold anymore. In this case it is well known the existence of another phase transition from the retrieval phase to a spin-glass phase, where the system can freeze on thermodynamic states that are not correlated with  any pattern. The occurrence of such new phase depends on the number of stored patterns: when $\a=P/N$ exceeds a critical capacity the system is not able to retrieve anymore. The study of the phase diagram and the value of the critical capacity have been intensively investigated in the literature with approximated techniques from statistical physics \cite{Ton,ags1,ags2}. Conversely, rigorous results are sparser and partial \cite{pastur1994replica,Tala1,Tala2,BGP3,BG5,bov-gen}.

\section{The multi-species Hopfield model} \label{sec:model}

In this section we introduce a multi-species Hopfield model which generalizes the standard model introduced in the previous section by allowing for neurons belonging to different groups (or species) characterized by different inter-group and intra-group couplings.\newline
Let $\nu\ge2$ be the number of species, each made of $N_{a}$ neurons, $a= 1, . . . , \nu$. 
We denote with $\sigma_i^a\in\{ -1, 1\}$, $i=1, . . . , N_a$, the state of the $i$-th neuron in the  $a$-th group.
The overall number of neurons is $N= \sum_{a=1}^{\nu} N_{a}$ and, in the thermodynamic limit $N\to\infty$, we define the parameters 
\begin{equation}\label{alfas}
\alpha_a :=\lim_{N\to\infty} \frac{N_a}{N} \in(0,1), ~~ \textrm{with} ~~ a=1, . . . , \nu 
\end{equation}
Without loss of generality we assume always $\alpha_a =N_a/N$, also before the limit. 
We associate to each group $a$ a set of $P$ patterns $\boldsymbol{\xi}^{\mu,a}=\{\xi^{\mu,a}_i\}_{i=1}^{N_a}$, $\mu=1,...,P$, drawn according to (\ref{orto}) and a set of $P$  \textit{Mattis magnetizations} 
\begin{equation} \label{eq:ma}
m_a^\mu:=m_a^\mu(\boldsymbol{\s}|\boldsymbol{\xi})=\frac{1}{N_{a}}\displaystyle\sum_{i=1}^{N_a} \xi_{i}^{\mu,a} \sigma_{i}^a  \;\;\;\;\;\;\;\;\;\;\;\; \forall \mu=1,\dots, P .
\end{equation}

\noindent
In other terms, $m_a^\mu$ represents the normalized overlap between the $\mu$-th \textit{pattern} $\boldsymbol{\xi}^{\mu,a}$, related to the $a$-th group, and the neuron configuration $\boldsymbol{\sigma}^a$ of the same group. 

The Hamiltonian $H_{N,\nu}(\boldsymbol{\sigma})$ describing the multi-species Hopfield model reads as
\begin{align}
\label{bubu}
H_{N,\nu}(\boldsymbol{\sigma}|\boldsymbol{\alpha},\boldsymbol{k},\boldsymbol{\xi}) &=-\frac{1}{2N}\sum_{a=1}^\nu\sum_{i,j=1}^{N_a} k_a \sum_{\mu=1}^P \xi_i^{\mu,a} \xi_j^{\mu,a} \sigma_i^a\sigma_j^a -\frac{1}{N}\sum_{(a, b)}^\nu\sum_{i=1}^{N_a}\sum_{j=1}^{N_b} \sum_{\mu=1}^P\xi_i^{\mu,a}\xi_j^{\mu,b} \sigma_i^a \sigma_j^b
\\
\label{bubu2}
&=-\frac{N}{2}\left(\displaystyle\sum_{\mu=1}^P\sum_{a=1}^{\nu}k_a \alpha_a^2({m_a^\mu})^2 +\displaystyle\sum_{\mu=1}^P \sum_{a, b=1}^{\nu} \alpha_a\alpha_b m_a^\mu m_b^\mu\right),
\end{align}
where the first term is due to interactions between neurons belonging to the same group, while the second term is due to interactions between neurons belonging to different groups; also, $k_a\in(0,1)$, $\forall a\in\{\ {1,\dots,\nu}\}$, are introduced to tune the magnitude of the intra-group respect to the inter-group interactions. Clearly, setting $\nu=1$, the second terms in (\ref{bubu}) and in (\ref{bubu2}) vanish and we recover the standard Hopfield model (\ref{HopfieldH}), i.e. $H_{N,1} (\boldsymbol{\sigma})  \equiv H_{N}(\boldsymbol{\sigma})$.  Also in the homogeneous case, i.e. $k_a=1$, $\forall a\in\{\ {1,\dots,\nu}\}$, we recover the standard Hopfield model for $N=\sum_{a=1}^\nu N_a$ neurons and patterns $\boldsymbol{\xi}^\mu=(\boldsymbol{\xi}^{\mu,1}, \ldots , \boldsymbol{\xi}^{\mu,\nu} )$, each one obtained by concatenating the $\nu$ patterns $\boldsymbol{\xi}^{\mu,a}$, i.e. $H_{N,\nu} (\boldsymbol{\sigma}|\boldsymbol{k}=\boldsymbol{1})  \equiv H_{N}(\boldsymbol{\sigma})$.

Again, we can introduce an external field $h_a^{\mu}$, $\forall a=1, ..., \nu$ and $\forall \mu=1, ..., P$, which forces the neurons of the $a$-th group $\boldsymbol{\sigma}^a$ to retrieve of the $\mu$-th pattern $\boldsymbol{\xi}^{\mu,a}$,  as
\begin{equation}
\label{bubu3}
H_{N,\nu}(\boldsymbol{\sigma}|\boldsymbol{\alpha},\boldsymbol{k},\mathbf{h},\boldsymbol{\xi})=-\frac{N}{2}\left(\displaystyle\sum_{\mu=1}^P\sum_{a=1}^{\nu}k_a \alpha_a^2({m_a^\mu})^2 +\displaystyle\sum_{\mu=1}^P \sum_{a, b=1}^{\nu} \alpha_a\alpha_b m_a^\mu m_b^\mu\right) - N\sum_{\mu=1}^P   \sum_{a=1}^{\nu} \alpha_a m_a^{\mu}h_a^{\mu}.
\end{equation}  
Again, setting $\nu=1$ or $\boldsymbol{k}=\boldsymbol{1}$ (and homogeneous fields $h^\mu_a=h^\mu$ for each species), we recover (\ref{HopfieldHh}). As before we shall drop the dependence on the system parameters whenever unnecessary.\\
In (\ref{bubu2}) we exploited the definition (\ref{eq:ma}) to write $H_{N,\nu}(\boldsymbol{\sigma})$ in terms of the Mattis magnetizations: notice that the free field Hamiltonian is a quadratic form of $\mathbf{m}$ and can be written as 
\begin{equation} \label{eq:quadratic}
H_{N,\nu}(\boldsymbol{\sigma}) = - \frac{N}{2} \sum_{\mu=1}^P  (\mathbf{m}^{\mu},  \mathbf{J} \mathbf{m}^{\mu})= - \frac{N}{2} \mathbf{m}^{\textrm{T}} \; \mathbf{J} \; \mathbf{m},
\end{equation}
where $\mathbf{m}^{\mu} = (m^{\mu}_1, ..., m^{\mu}_{\nu})$, $\mathbf{m}$ is the $\nu \times P$ matrix whose columns are $\mathbf{m}^{\mu} $  and $\mathbf{J}$ is the interaction matrix with entries
\begin{equation}
\textbf{J}_{ab}=\begin{cases}
                & k_a \alpha_a^2 \; \; \; \ \ \   \text{if } \; \; a=b
                \\
                & \alpha_a\alpha_b \; \; \; \;\;\;\;  \text{if } \;\; a\neq b
                .
                \end{cases}
                \label{gei}
\end{equation}
Analogously to the standard Hopfield model, we define the partition function and the pressure of the model as 
\be\label{freedef}
Z_{N,\nu}(\b,\boldsymbol{\alpha},\boldsymbol{k}, \boldsymbol{h}| \boldsymbol{\xi})= \sum_{\boldsymbol{\s}}\exp\{-\b H_{N,\nu}(\boldsymbol{\sigma})\};\ \ \ \ 
f_{N,\nu}(\b,\boldsymbol{\alpha},\boldsymbol{k}, \boldsymbol{h}| \boldsymbol{\xi})=\frac 1 N \ln Z_{N,\nu}(\b,\boldsymbol{\alpha},\boldsymbol{k}, \boldsymbol{h}| \boldsymbol{\xi}).
\ee
The interaction matrix (\ref{gei}) has, in general, not definite sign for small values of $k_a$, meaning higher disassortativity in the interactions among groups. This lack of convexity makes the analysis of the free energy highly non-trivial, since the rigorous techniques typically adopted strongly rely on the control of the order parameters fluctuations, the latter being an easy task using convexity properties (see e.g., \cite{barra0,guerra-HJ}).  
The main result of this paper is the following 
\begin{thm}\label{THM}
Let $c>1+ \displaystyle\frac{\nu-1}{k^*}$, where $k^{*}=\min \{ k_a\}_{a=1}^\nu$ and $\textbf{J}^c= c \ \textbf{diag}(k_1\alpha_1^2, .. . . , k_\nu\alpha_\nu^2) -\textbf{J}$. The thermodynamic limit of the pressure of the multi-species Hopfield model described by the Hamiltonian (\ref{bubu3}) is given a.s.  by
\begin{eqnarray}\label{thmfree}
 f_{\nu}(\beta,\boldsymbol{\alpha},\boldsymbol{k}, \boldsymbol{h})&=&\lim_{N\to\infty}\frac 1 N \ln Z_{N,\nu}(\b,\boldsymbol{\alpha},\boldsymbol{k}, \boldsymbol{h}| \boldsymbol{\xi})\nonumber \\
&=& \displaystyle\sup_{{\textbf{m}}\in \mathbb{R}^{\nu\times P}} \biggl( \b\frac{\textbf{m}^T \textbf{J}^c\textbf{m}}{2} 
  +\displaystyle\sum_{a=1}^{\nu}\alpha_a f(\beta  ck_a\alpha_a \ , \beta \boldsymbol{h}_a \ - (\beta(\textbf{A}^{-1}\textbf{J}^c{\textbf{m}})_a)\biggr).
\label{ottaviuccia}
\end {eqnarray}
where  $f(t,\textbf{y})$ represents the pressure of the standard Hopfield model at a given temperature $t$ and with a suitable external field $\textbf{y}$, and $\mathbf{A} = \textbf{diag}(\alpha_1,...,\alpha_{\nu})$. The optimal order parameters $\textbf{m}$,  are the solution of the self-consistent equations 
\begin{equation}
\textbf{m}_a=\biggl<\boldsymbol{\xi} \tanh\biggl[\beta\boldsymbol{\xi}\cdot(\textbf{A}^{-1}\textbf{J}{\textbf{m}}+\boldsymbol{h})_a  \biggr]\biggr>_{\boldsymbol{\xi}\in\{-1,1\}^{P}}.
\label{fiorellona}
\end{equation}
\end{thm}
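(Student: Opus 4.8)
The plan is to generalize the convexification strategy of \cite{genovese2016non} inside a Hamilton--Jacobi (HJ) interpolation, reducing the coupled $\nu$-species model to a superposition of decoupled single-species Hopfield models whose pressure $f$ is already available from Section~\ref{sec:Hopfield}. The first step is the convexification itself. Writing $\mathbf{D}=\diag(k_1\alpha_1^2,\dots,k_\nu\alpha_\nu^2)$ so that $\mathbf{J}=c\mathbf{D}-\mathbf{J}^c$, I would check that the hypothesis $c>1+\frac{\nu-1}{k^*}$ is \emph{exactly} the threshold making $\mathbf{J}^c\succ0$: with $y_a=\alpha_a x_a$ one gets $\mathbf{x}^T\mathbf{J}^c\mathbf{x}=\sum_a[(c-1)k_a+1]\,y_a^2-(\sum_a y_a)^2\geq[(c-1)k^*+1-\nu]\sum_a y_a^2$, which is strictly positive precisely under that bound. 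This rewrites $-\b H_{N,\nu}$ as the intra-group ferromagnetic part $\frac{\b Nc}{2}\mathbf{m}^T\mathbf{D}\mathbf{m}$ (which stays decoupled across species) minus the sign-definite coupling $\frac{\b N}{2}\mathbf{m}^T\mathbf{J}^c\mathbf{m}$.

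Next I would interpolate only on the coupling, keeping the diagonal Hopfield terms intact. For $t\in[0,1]$ and auxiliary fields $\mathbf{x}=(x_a^\mu)$ set
\[
\psi_N(t,\mathbf{x})=\frac1N\,\E_{\boldsymbol{\xi}}\ln\sum_{\boldsymbol{\sigma}}\exp\Big\{\tfrac{\b Nc}{2}\mathbf{m}^T\mathbf{D}\mathbf{m}-t\,\tfrac{\b N}{2}\mathbf{m}^T\mathbf{J}^c\mathbf{m}+N\sum_{a=1}^\nu\alpha_a\,\mathbf{x}_a\!\cdot\!\mathbf{m}_a\Big\},
\]
so that $\psi_N(1,\b\mathbf{h})=f_{N,\nu}$ while $\psi_N(0,\cdot)$ factorizes over species. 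Differentiating gives $\partial_t\psi_N=-\frac\b2\E\langle\mathbf{m}^T\mathbf{J}^c\mathbf{m}\rangle_t$ and $\nabla_{\mathbf{x}}\psi_N=\mathbf{A}\,\E\langle\mathbf{m}\rangle_t$, whence, with $\mathbf{B}=\mathbf{A}^{-1}\mathbf{J}^c\mathbf{A}^{-1}$,
\[
\partial_t\psi_N+\tfrac\b2(\nabla_{\mathbf{x}}\psi_N)^T\mathbf{B}\,(\nabla_{\mathbf{x}}\psi_N)=-\tfrac\b2\big(\E\langle\mathbf{m}^T\mathbf{J}^c\mathbf{m}\rangle_t-\E\langle\mathbf{m}\rangle_t^T\,\mathbf{J}^c\,\E\langle\mathbf{m}\rangle_t\big)\leq 0,
\]
the right-hand side being $-\frac\b2$ times a covariance of $\mathbf{m}$ contracted with the positive form $\mathbf{J}^c$. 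Since $\frac{\b Nck_a\alpha_a^2}{2}=\frac{\b ck_a\alpha_a}{2}N_a$, the $t=0$ factorization identifies the initial datum as $\psi_N(0,\mathbf{x})\to\sum_a\alpha_a f(\b ck_a\alpha_a,\mathbf{x}_a)$.

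Granting that the covariance term vanishes as $N\to\infty$, the limit $\psi$ solves $\partial_t\psi+\frac\b2(\nabla\psi)^T\mathbf{B}\,\nabla\psi=0$. The initial datum is convex in $\mathbf{x}$ (a log-partition function), and $\mathbf{J}^c\succ0$ makes the Hamiltonian $H(\mathbf{p})=\frac\b2\mathbf{p}^T\mathbf{B}\mathbf{p}$ convex, so the Hopf (respectively Hopf--Lax) representation expresses $\psi(1,\b\mathbf{h})$ as a Legendre transform of $\sum_a\alpha_a f(\b ck_a\alpha_a,\cdot)$; the change of variables $\mathbf{p}\leftrightarrow\mathbf{A}\mathbf{m}$ recasts it as the extremal principle \eqref{ottaviuccia}, with quadratic penalty $\frac\b2\mathbf{m}^T\mathbf{J}^c\mathbf{m}$ and shifted fields $\b\mathbf{h}_a-\b(\mathbf{A}^{-1}\mathbf{J}^c\mathbf{m})_a$. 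Imposing stationarity, using the single-species self-consistency $\mathbf{M}_a=\langle\boldsymbol{\xi}\tanh(\boldsymbol{\xi}(\b ck_a\alpha_a\mathbf{M}_a+\mathbf{y}_a))\rangle$, and the identity $\mathbf{A}^{-1}(c\mathbf{D}-\mathbf{J}^c)=\mathbf{A}^{-1}\mathbf{J}$, then collapses the shifted field and reproduces exactly \eqref{fiorellona}.

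The main obstacle is the deferred analytic point: showing that $\E\langle\mathbf{m}^T\mathbf{J}^c\mathbf{m}\rangle_t-\E\langle\mathbf{m}\rangle_t^T\,\mathbf{J}^c\,\E\langle\mathbf{m}\rangle_t\to0$ \emph{uniformly} in $t\in[0,1]$. This is precisely the control of order-parameter fluctuations that convexity usually supplies and that the non-definiteness of the original $\mathbf{J}$ obstructs; the convexification is what rescues it, because along the interpolation the coupling enters only through the positive form $\mathbf{J}^c$, and with the favorable sign $-t\frac{\b N}{2}\mathbf{m}^T\mathbf{J}^c\mathbf{m}$ that penalizes large magnetizations. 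I would establish the required concentration by combining the self-averaging of the pressure over the patterns $\boldsymbol{\xi}$ with the known concentration of $\mathbf{m}$ for the single-species Hopfield model, propagating it along $t$ via the sign-definiteness of $\mathbf{J}^c$; this, together with the low-load (purely ferromagnetic) nature of the model, is where the bulk of the technical work will reside.
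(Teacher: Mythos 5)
Your proposal follows essentially the same route as the paper's proof: convexification of the quadratic form via $\mathbf{J}^c$ under the stated bound on $c$, a Hamilton--Jacobi interpolation whose source term is a $\mathbf{J}^c$-contracted covariance of the magnetizations (asserted to vanish in the low-load limit), a Hopf--Lax representation with the factorized single-species Hopfield pressures as initial datum, and stationarity plus the identity $\mathbf{J}=c\,\mathrm{diag}(k_a\alpha_a^2)-\mathbf{J}^c$ to collapse the shifted fields into the self-consistent equations. The only presentational difference is that the paper first factorizes $\mathbf{J}^c=\mathbf{P}^T\mathbf{P}$ (Lemma 1) and rotates $\mathbf{m}'=\mathbf{P}\mathbf{m}$ so as to work with the standard isotropic Hamilton--Jacobi equation, whereas you keep the anisotropic convex Hamiltonian $\frac{\beta}{2}\mathbf{p}^T\mathbf{B}\mathbf{p}$ and invoke the general Hopf--Lax formula; both versions defer the same technical point, namely the vanishing of the order-parameter fluctuations along the interpolation.
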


As a consistency check, by setting $k_a = 1$, $\boldsymbol{h}_a=\boldsymbol{h}$ and posing $M^{\mu} = \sum_{a=1}^{\nu} m_a^{\mu} \alpha_a$ for any $a=1,...,\nu$ and $\mu=1,...,P$, we get
\begin{eqnarray}
\boldsymbol{M}  = \sum_{a=1}^{\nu}  \alpha_a \left \langle \boldsymbol{\xi} \tanh \left [\beta \boldsymbol{\xi} \cdot (\boldsymbol{M}+\boldsymbol{h}) \right]\right \rangle_{\boldsymbol{\xi}}= \left \langle \boldsymbol{\xi} \tanh \left [\beta \boldsymbol{\xi} \cdot (\boldsymbol{M}+\boldsymbol{h}) \right]\right \rangle_{\boldsymbol{\xi}},
\label{fiorellona2}
\end{eqnarray}
which are the standard Hopfield magnetizations (\ref{gaiaa}) as expected. In general,  plugging  the solution of ($\ref{fiorellona}$) into  ($\ref{thmfree}$)  we get
\be\label{freefin}
f_{\nu}(\beta,\boldsymbol{\alpha},\boldsymbol{k}, \boldsymbol{h})=\ln 2
  +\displaystyle\sum_{a=1}^{\nu}\alpha_a \meanv{\log\cosh( \beta \boldsymbol{\xi}\cdot(\textbf{A}^{-1}\textbf{J}{\textbf{m}} +\boldsymbol{h})_a)}_{\boldsymbol{\xi}}
-\b\frac{\textbf{m}^T \textbf{J}\textbf{m}}{2}. 
\ee
Taking the derivative  of $(\ref{freefin})$ with respect to the external fields, one can see that the solution of ($\ref{fiorellona}$) has the meaning of the averaged Mattis magnetization in the thermodynamic limit. It is important to note that ($\ref{freefin}$) does not depend on $c$ as it should.  In fact, $c$ is just needed to get the correct convexity of the variational principle ($\ref{thmfree}$), i.e. it folds the trial free energy  leaving its extremal point unchanged.

In the last section we will present the proof of Theorem \ref{THM} based on the introduction of a suitable perturbation, induced by $c$,  in such a way that the resulting Hamiltonian becomes a convex form of the $\mathbf{m}$'s.  This way we will be able to map the problem into a mechanical framework in such a way that the free energy can be obtained as a solution of a Hamilton-Jacobi equation. Before proving the Theorem (see Sec.~\ref{sec:solution}), in the next Section we present some special cases of the Multi-species Hopfield mode.

\section{Special cases}\label{sec:special}
In this section we show two particular cases for the multi-species Hopfield model (\ref{bubu}), namely the  bi-directional associative memory (BAM) model and the three-layer Restricted Boltzmann Machine (RBM), two interesting generalization of the Hopfield model introduced  in the context of, respectively, biological neural networks and artificial neural networks and Deep Learning.

\subsection{The BAM model }
As we mentioned in Sec.~\ref{sec:Hopfield}, the Hopfield model reproduces the ability of the human brain to recall information from corrupted or partial data, yet human memory has many other characteristics. Among these is the ability to recall information coupled to one another: for example, if we remember the surname of someone we know, listening to that surname, even in the presence of some errors, we will probably be able to recall both the surname and the name of this person. The Hopfield model does not show this kind of behavior. The  bi-directional associative memory (BAM) model, introduced by Kosko \cite{kosko1988bidirectional} in 1988, is the simplest network with this property: it is a neural network capable of storing and associating pairs of data, encoded by pairs of binary strings, of length $N$ and $M$ respectively, using a network of $N+M$ neurons, as we are going to review briefly. 
\newline
In this model one distinguishes two different types of neurons: the \emph{input neurons} $\sigma_i $, labelled by $i \in \{1, ..., N\}$, and the \emph{output neurons} $\sigma_i$, labelled by $i \in \{1+N,. . . , M+N\}$. The connections between the input and output neurons depend on the $P$ pairs of patterns $(\boldsymbol{\xi}, \boldsymbol{\eta})$ with $N$ and $M$ components, respectively. Each component $\xi_i^{\mu}$, $i=1, ..., N, \mu=1,..., P$ and $\eta_i^{\mu}$, $i=N+1, ..., N+M, \mu=1,..., P$ are independent random variables and identically distributed with zero average.
The synaptic matrix of the interactions between the spin $i \in \{1, ..., N\}$ and the spin $j \in \{N + 1, ..., N + M\}$ is given by 
\begin{equation}\label{eq:BAMJ}
J_{ij} =\begin{cases}
                &\frac{1}{\sqrt{MN}} \sum_{\mu=1}^P \xi_i^{\mu} \eta_j^{\mu} ~~~ \textrm{if} ~~ i \in \{1, ..., N \} ~~ \textrm{and} ~~ j \in \{N+1, ..., N+M \}.
                \\
                & 0 ~~~ \text{otherwise}
                .
                \end{cases}
\end{equation}
Each set of neuron can be associated to an order parameter, referred to as $\mathbf{m}$ and $\mathbf{n}$, respectively, whose $\mu$-th entry quantifies the overlap between the spin configuration and the related patterns, that is
\begin{equation}
m_{\mu} = \frac{1}{N} \sum_{i=1}^N \sigma_i \xi_i^{\mu}, ~~~~~  n_{\mu} = \frac{1}{M} \sum_{i=N+1}^{N+M} \sigma_i \eta_i^{\mu}
\end{equation}
The Hamiltonian describing this system reads as
\begin{equation} \label{HBAM}
H_{N,M}(\boldsymbol{\sigma}|\boldsymbol{\xi},\boldsymbol{\eta}) =\frac{1}{\sqrt{NM}}\sum_{\mu=1}^P \sum_{i=1}^N\sum_{j=1}^M  \xi_i^\mu\eta_j^\mu \sigma_i\sigma_j=\sqrt{MN} \sum_{\mu=1}^P m_\mu n_\mu,
\end{equation}
and the dynamics (\ref{eq:dinamica}) can be adopted straightforwardly to make the system relax toward an equilibrium configuration described by the related Boltzmann-Gibbs distribution.
The recall phase occurs analogously as for the Hopfield model: if, given a certain initial point, the spin configuration converges to a state such that $\sigma_i=\xi_i^{\mu}$ $\forall i=1,...,N$ and $\sigma_i=\eta_i^{\mu}$ $\forall i=N+1,...,N+M$, the information, codified by the strings $\boldsymbol{\xi}^{\mu}$ and $\boldsymbol{\eta}^{\mu}$, is retrieved. The BAM model was investigated extensively by \cite{kurchan1994statistical,englisch1995bam} addressing the low-storage (i.e., $P/M$ and $P/N$ both vanishing in the thermodynamic limit) as well as the high-load (i.e., $P/M$ and $P/N$ both finite in the thermodynamic limit) regimes. In particular, in the former case they found that the order parameters fulfil the following self-consistent equations
\begin{align}
\label{iphone}
&\textbf{m}= \langle \boldsymbol{\xi} \tanh(\beta \textstyle\sqrt{\frac{M}{N}} \ \boldsymbol{\xi}\cdot\textbf{n}) \rangle_{\boldsymbol{\xi}},
\\
\label{iphonex}
&\textbf{n}= \langle \boldsymbol{\eta} \tanh(\beta \textstyle\sqrt{\frac{N}{M}} \ \boldsymbol{\eta}\cdot\textbf{m}) \rangle_{\boldsymbol{\eta}},
\end{align}
and, in the latter case, they derive an expression for the critical capacity, namely the maximum number of patterns that the network is able to safely handle: this quantity is shown to depend on the ratio $M/N$ and it is maximum for $M=N$\footnote{The critical threshold for the BAM model occurs to be lower than that for the standard Hopfield model and this result may be explained by noticing that storing a pair of patterns corresponds to doubling the information stored per pattern and therefore the total information stored in the $(N+M)^2/2$ synaptic connections.}. Further results can be found in \cite{Liao,Cao,Cao2,Cao3,Cao4,Park}.

%
The Hamiltonian (\ref{HBAM}) can be recast in the general description provided by (\ref{bubu}), by setting $\nu=2$ and $k_1 = k_2 =0$. Otherwise stated, we are considering two groups where interactions only occur between neurons belonging to different groups. Moreover, 
$\boldsymbol{\xi}^{\mu} = \boldsymbol{\xi}^{\mu,1}$ and $\boldsymbol{\eta}^{\mu} = \boldsymbol{\xi}^{\mu,2}$; $\boldsymbol{\sigma}^1 = \{ \sigma_1, ..., \sigma_N \}$ and 
$\boldsymbol{\sigma}^2 = \{ \sigma_{N+1}, ..., \sigma_{N+M} \}$; $\alpha_1=\frac{N}{N+M}$ and $\alpha_2=\frac{M}{N+M}$.
By setting these parameters in the self-consistent equations (\ref{fiorellona}), simply rescaling the temperature as $\b\to\b/{\sqrt{\a_1\a_2}}$ because of the different normalizations in ($\ref{HBAM}$) and ($\ref{bubu2}$)  we exactly recover (\ref{iphone}) and (\ref{iphonex}).

 \begin{figure}[tb]
\begin{center}
\includegraphics[scale=0.4]{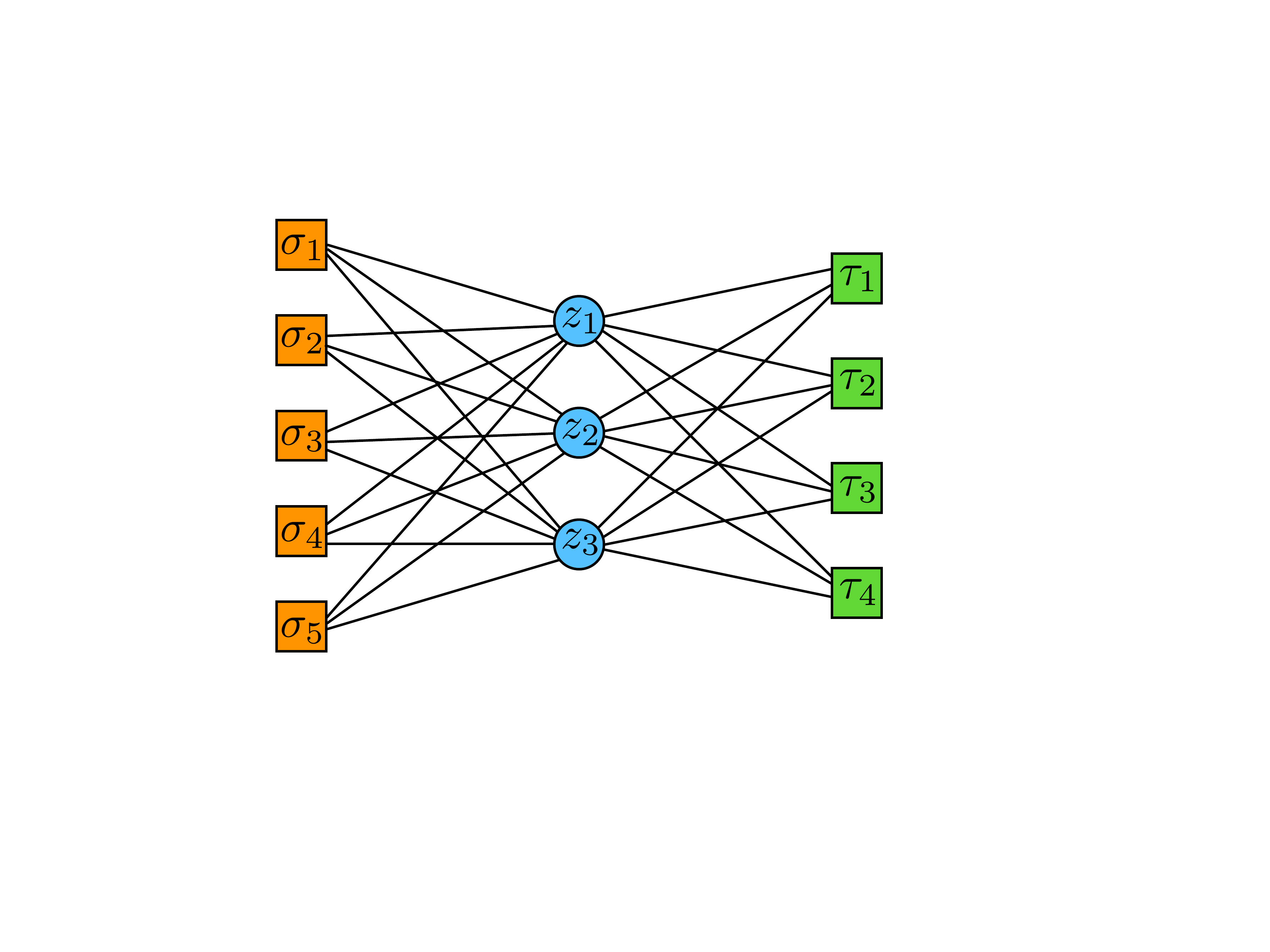}
\caption{Sketch of a RBM with three layers. The visible layer (on the left) is made of $N=5$ spins ($\sigma_i$, $i=1,...,N$), the output layer (on the right) is made of $M=4$ spins ($\sigma_i$, $i=1,...,N$), and the hidden layer (in the middle) is made of $P=3$ spins ($z_i$, $i=1,...,P$). In our analysis we assumed boolean visible and output  neurons ($\boldsymbol{\sigma} \in \{-1,+1\}^N$ and $\boldsymbol{\tau} \in \{-1,+1\}^M$) while Gaussian neurons ($\boldsymbol{z} \in \mathbb{R}^P$) in the hidden layer.}
\label{fig:rbm}
\end{center}
\end{figure}

\subsection{Three-layer Restricted Boltzmann Machines} \label{sec:special2}
In Sec.~\ref{sec:Hopfield} we presented the Hopfield network as a standard model for associative memory, while the basic model for neural networks learning processes is the Boltzmann machine (BM). Remarkably, retrieval and learning represent two complementary aspects of the cognitive process. In particular, the Hebb rule (\ref{hebbrule}) can be shown to emerge from a learning process accomplished through a BM \cite{BBCS,ABDG}. The recent success of Deep Learning in a wide range of  different applicative sciences poses the attention on a specific class of BMs, the Deep Boltzmann Machines \cite{Bengio}. They are composed of multiple layers of units (neurons), where units belonging to any two consecutive layers are connected. The elementary component of this architecture is thus a particular two layers BM, also referred to as Restricted Boltzmann Machine (RBM).
 
More precisely, let us consider a group of $N$ binary units $\si \in \{-1, +1\}$, $i \in \{1, ..., N\}$, which represent the neurons of the first layer (also called visible), a group of units $z_{\mu}$, $\mu \in \{1,. . . , P\}$, which represent the neurons of the second layer (also called hidden), and a set of weights $w_i^{\mu}$ associated to the links connecting the neurons $\s_i$ and  $z_\mu$. 
For the scope of this section we consider the hidden units to be real variables, distributed with a Gaussian prior. All the results can be suitably generalized to the case of different priors \cite{BGST1, BGST2}.
We can then define the Hamiltonian of the restricted Boltzmann machine as 
\begin{equation}
H_N(\boldsymbol{\sigma}, \textbf{z}| \textbf{w})= -\frac{1}{\sqrt{N}} \displaystyle\sum_{i=1}^N\sum_{\mu=1}^P w_i^\mu \sigma_i z_\mu.
\end{equation} 
The associated Gibbs distribution  is used typically to fit some data distribution over the visible layer, optimizing over the weights. Thus a RBM can be used as a parametric probabilistic model in an unsupervised learning framework \cite{gabrie1}.  For a given realization of the weights  the related intensive pressure of a RBM reads as
\begin{equation}
f_N(\beta|\textbf{w}) =\frac{1}{N} \ln \displaystyle\sum_{\boldsymbol{\sigma}}\prod_{\mu} \int_{-\infty}^{\infty} dz_\mu\frac{1}{\sqrt{2\pi}}\exp\biggl(-\frac{z_\mu^2}{2}\biggr) \exp\biggl(  \frac{\beta}{\sqrt{N}} \sum_\mu\sum_i w_i^\mu \sigma_i z_\mu\biggr).
\end{equation}
Since there are no weights between neurons belonging to the same layer, we can marginalize over the $z$ variables by evaluating the Gaussian integral obtaining
\begin{align}
\nonumber
f_N(\beta|\textbf{w}) = \frac{1}{N} \ln \displaystyle \sum_{\boldsymbol{\sigma}} \exp \Biggl\{\frac{\beta^2}{2N}\sum_{i,j=1}^N \biggl( \sum_{\mu=1}^P w_i^\mu w_j^\mu\biggr)\sigma_i\sigma_j   \Biggr\}.
\end{align}
Comparing this result with the pressure of the standard Hopfield model (\ref{HopfieldH}), we can see that the two expressions are equivalent (apart for the value of the temperature) identifying the weights of the RBM $\mathbf{w}^{\mu}$ with the patterns $\boldsymbol{\xi}^{\mu}$ of the Hopfield network. This mapping was used in \cite{BGST1,BGST2} for studying the performances of RBM's in learning processes using the knowledge of the Hopfield model phase diagram. Hereafter, we use the same argument to show that a 3-layer RBM is equivalent to a 2-species Hopfield model. This is in agreement with a recent finding by M.Mezard \cite{mezard} that has shown, on the other hand, how a 2-layers RBM with combinatorial weights can be seen as a 3-layers RBM.

Let us consider a RBM made of three layers: let $\sigma_i$, $i \in \{\ 1,\dots, N \}$, the neurons belonging to the first (visible) layer, $z_\mu$, $\mu\in\{\ 1,\dots, P \}$, the neurons belonging to the hidden layer and $\tau_k$, $k\in\{\ 1,\dots, M \}$, the neurons belonging to the third (output) layer, as sketched in Fig.~\ref{fig:rbm}.
We focus on the regime $P \ll N,M$ and we take $\sigma_i,\tau_k\in \{-1, +1 \}$ and $z_\mu\in\mathbb{R}$ as Gaussian variables for simplicity.  This kind of structure is often referred to as an autoencoder.
The Boltzmann-Gibbs distribution associated to this system is 
\begin{equation} 
p_{M,N}(\boldsymbol{\sigma}, \boldsymbol{\tau}, \textbf{z}|\boldsymbol{\xi}, \boldsymbol{\eta})= \frac {1}{Z_{M,N}}\prod_{\mu}\frac{1}{\sqrt{2\pi}}\exp\biggl(-\frac{z_\mu^2}{2}\Biggr)\exp\Biggl\{{{\frac {\sqrt{\beta}}{\sqrt{N}}\displaystyle\sum_{i=1}^N\sum_{\mu=1}^P{{\xi^\mu_i}\sigma_iz_\mu}}+ {\frac {\sqrt{\b}}{\sqrt{M}}\displaystyle\sum_{k=1}^M\sum_{\mu=1}^P {{\eta^\mu_k}\tau_k z_\mu}}}\Biggr\}.
\label{dist prob}
\end{equation}
Since neurons belonging to the same layer are not interacting each other, we can marginalize the joint probability (\ref{dist prob}) over $z$ exploiting the Gaussian integration as
\begin{eqnarray}
\label{ciao}
p_{M,N}(\boldsymbol{\sigma}, \boldsymbol{\tau}|\boldsymbol{\xi}, \boldsymbol{\eta}) &=& \frac{1}{Z_{M,N}}  \exp \Biggl\{{\frac{\b}{2N} \displaystyle\sum_{\mu=1}^P\sum_ {i,j=1}^{N,N} {\xi_i}^\mu{\xi_j}^\mu \sigma_i \sigma_j }+  {\frac{\b}{2M} \displaystyle\sum_{\mu=1}^P\sum_ {k,l=1}^{M,M} {\eta_k}^\mu{\eta_l}^\mu \tau_k \tau_l} 
\nonumber \\
&+& \frac{\b}{\sqrt{NM}}\displaystyle \sum_{\mu=1}^P\sum_{i,k=1}^{N,M}{\xi_i}^\mu{\eta_k}^\mu \sigma_i \tau_k\Biggr\} .
\end{eqnarray}  
\par\noindent 
Comparing (\ref{ciao}) with (\ref{eq:peq}) one can see that the 3-layers RBM considered here is equivalent to a 2-species Hopfiled model with Hamiltonian
\begin{align} 
\label{hamilton}
H_{M,N} (\boldsymbol{\sigma},\boldsymbol{\tau})&= -\biggl(\frac{1}{2N} \displaystyle\sum_{\mu=1}^P\sum_ {i,j=1}^{N,N} {\xi_i}^\mu{\xi_j}^\mu \sigma_i \sigma_j +  \frac{1}{2M} \displaystyle\sum_{\mu=1}^P \sum_ {k,l=1}^{M,M} {\eta_k}^\mu{\eta_l}^\mu \tau_k \tau_l + \frac{1}{\sqrt{NM}} \displaystyle\sum_{\mu=1}^P \sum_{i,k=1}^{N,M}{\xi_i}^\mu{\eta_k}^\mu \sigma_i \tau_k\biggr).
\end{align}
\noindent Introducing the two sets of order parameters 
\begin{equation}
m_\mu=\frac{1}{N}\displaystyle\sum_{i=1}^N {\xi_i}^\mu \sigma_i, \ \ \ \ n_\mu=\frac{1}{M}\displaystyle\sum_{k=1}^M {\eta_k}^\mu \tau_k.
\end{equation}
the expression (\ref{hamilton}) can be recast as 
\begin{equation} \label{HH}
H_{M,N}(\textbf{m},\textbf{n})= -{\frac{N}{2} \sum_{\mu=1}^P{m_\mu}^2}-{\frac{M}{2}\sum_{\mu=1}^P{n_\mu}^2} -{\sqrt{NM}\sum_{\mu=1}^Pm_\mu n_\mu},
\end{equation}
\par\medskip\noindent 
which corresponds to a multi-species Hopfield model  (\ref{bubu}) as long as we set $\nu=2$, $\b\to\b(M+N)/\sqrt{MN}$, $k_1=\sqrt{M/N}$ and $k_2=\sqrt{N/M}$. The solution of this system can therefore be derived from (\ref{ottaviuccia}) by properly setting the parameters.
To fix ideas we assume, without loss of generality, that $M=\gamma N$, where $\gamma\in\mathbb{R}^+$. 
Thus, recalling (\ref{fiorellona}), we find
\begin{equation} \label{m1}
\mathbf{m}=\left \langle\boldsymbol{\xi} \tanh \left[ \b \sum_{\mu=1}^P(m_{\mu}+\sqrt{\gamma} n_{\mu}) \cdot\boldsymbol{\xi} \right] \right \rangle_{\boldsymbol{\xi}},
\end{equation}
\begin{equation}  \label{n1}
\mathbf{n}=\left \langle\boldsymbol{\eta} \tanh \left[ \frac{\b}{\sqrt{\gamma}} \sum_{\mu=1}^P \left(m_{\mu}+\sqrt{\gamma} \ n_{\mu} \right) \cdot\boldsymbol{\eta} \right] \right \rangle_{\boldsymbol{\eta}}.
\end{equation}


The structure of (\ref{m1}) and (\ref{n1}) suggests that we can linearly combine $\mathbf{m}$ and $\mathbf{n}$ to get a unique order parameter, referred to as $\mathbf{p}$ and whose $\mu$-th component reads as $p_{\mu} = m_{\mu} + \sqrt{\gamma} n_{\mu}$, $\mu=1,...,P$. As a result, (\ref{m1}) and (\ref{n1}) can be recast as
\begin{equation}\label{gaia}
\mathbf{p} = \left \langle \boldsymbol{\xi} \tanh (\b \boldsymbol{\xi}\cdot \textbf{p})\right \rangle_{\boldsymbol{\xi}} +\sqrt{\gamma} \ \left \langle \boldsymbol{\eta}\tanh \biggl(\frac{\b}{\sqrt{\gamma}} \boldsymbol{\eta}\cdot\textbf{p}\biggr)\right \rangle_{\boldsymbol{\eta}}
\end{equation}
This self-consistent equation was solved numerical and results are shown in Fig..~\ref{fig:P1}.
\begin{figure}[tb]
\begin{center}
\includegraphics[scale=0.4]{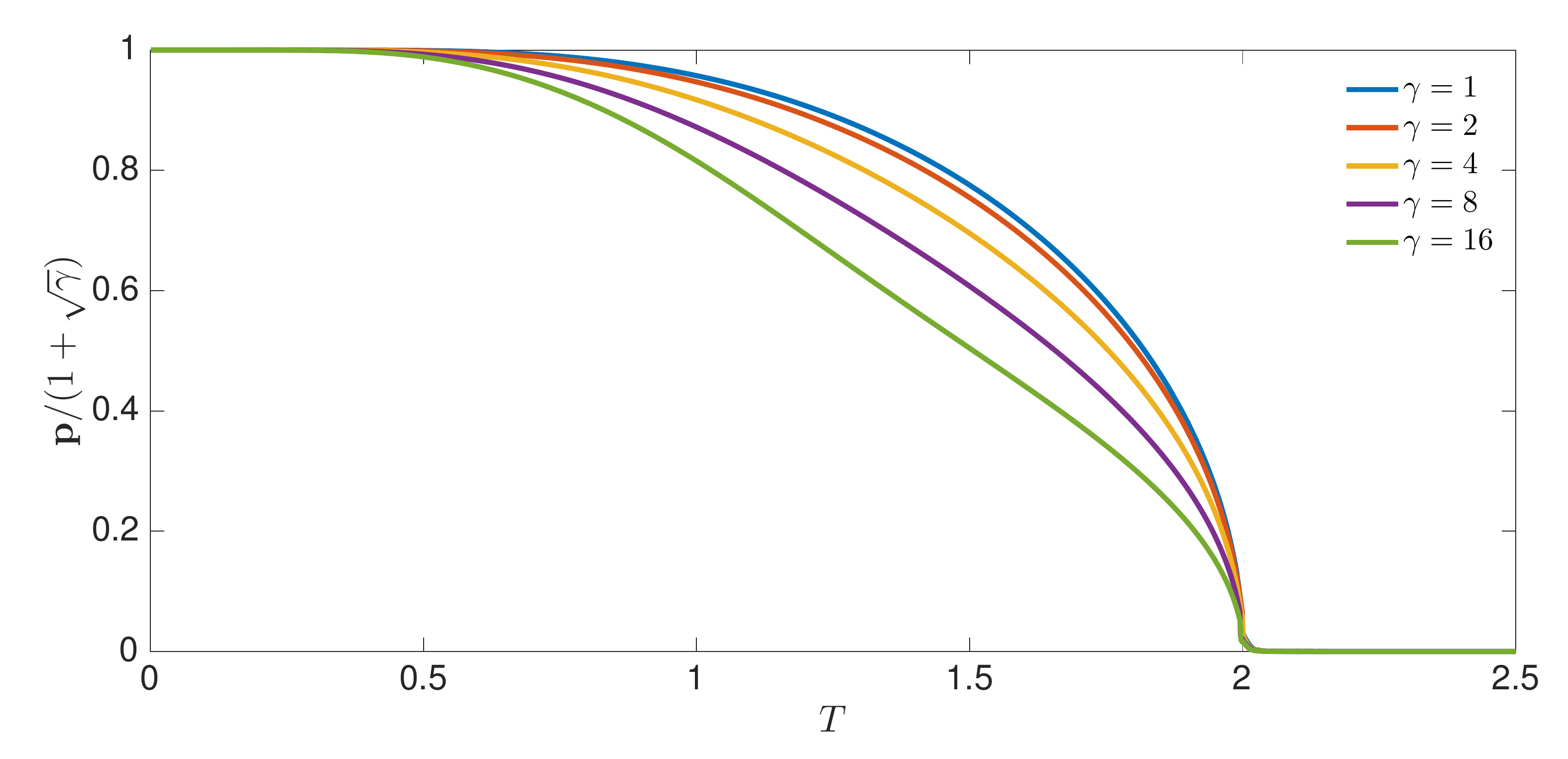}
\caption{This plot shows the behavior of the normalized overlap $p_{\mu}/(1 + \sqrt{\gamma})$ related to the retrieved pattern, as a function of the temperature $T$, obtained by solving the self-consistent equation (\ref{gaia}) for $P=2$. In this case one pattern, say $\mu=1$ to fix ideas, is retrieved and the related overlap follows the outline shown in this figure; the remaining pattern $\mu=2$ is not retrieved and the related overlap  is null for any $T$. Several values of the parameter $\gamma$ are considered and shown in different colors, as explained by the legend.} 
\label{fig:P1}
\end{center}
\end{figure}
These plots suggest a phase transition at $T_c=2$, such that, as $T>T_c$, the magnetization vanishes and retrieval is no longer achievable. In fact let us notice that 
\begin{align}
\textbf{p}^2&=\Big\langle \boldsymbol{\xi} \tanh (\b \boldsymbol{\xi}\cdot \textbf{p})\Big\rangle_{\boldsymbol{\xi}} \cdot\textbf{p} +\sqrt{\gamma} \ \Big\langle \boldsymbol{\eta}\tanh \biggl(\frac{\beta}{\sqrt{\gamma}} \boldsymbol{\eta}\cdot\textbf{p}\biggr)\Big\rangle_{\boldsymbol{\eta}}\cdot \textbf{p}
\\
\nonumber
& = \Big\langle \xi\cdot\textbf{p} \tanh (\beta \boldsymbol{\xi} \cdot \textbf{p})\Big\rangle_{\boldsymbol{\xi}} +\sqrt{\gamma} \ \Big\langle \boldsymbol{\eta}\cdot\textbf{p}\tanh \biggl(\frac{\beta}{\sqrt{\gamma}} \boldsymbol{\eta}\cdot\textbf{p}\biggr)\Big\rangle_{\boldsymbol{\eta}},
\end{align}
and, recalling that $\lvert \tanh(z)\rvert \le \lvert z\rvert$, we get
\begin{align}
\label{cosascrivo}
\textbf{p}^2 & \le \beta\bigl(\langle (\boldsymbol{\xi} \cdot \textbf{p})^2\rangle_{\boldsymbol{\xi}}+ \langle (\boldsymbol{\eta} \cdot \textbf{p})^2\rangle_{\boldsymbol{\eta}}\bigr)= 2\beta \langle (\lambda \cdot \textbf{p})^2\rangle_{\boldsymbol{\lambda}} =2\beta \textbf{p}^2.
\end{align}
\par\noindent  We can therefore conclude that as $\beta < \frac{1}{2}$, namely as $T>2$,  the only equilibrium solution for the problem is $\textbf{p}=\textbf{0}$.
We want to check further that the critical temperature of the model is $T_c=2$. 
Since
\begin{equation*}
\langle \xi^\mu \rangle _{\boldsymbol{\xi}} =0 \;\;\;\; \text{e} \;\;\;\; \langle \xi^\mu\xi^\nu \rangle_{\boldsymbol{\xi}} =\delta_{\mu\nu} \ \ \ \langle \xi^\mu\xi^\nu\xi^\rho\xi^\sigma\rangle_{\boldsymbol{\xi}}=\delta_{\mu\nu}\delta_{\rho\sigma} + \delta_{\mu\rho}\delta_{\nu\sigma} +\delta_{\mu\sigma}\delta_{\rho\nu}-2\delta_{\mu\rho}\delta_{\rho\nu}\delta_{\nu\sigma},
\end{equation*}
expanding (\ref{gaia}) around $p_\mu = 0$ we get 
\begin{align*}
p_\mu =
 \beta p_\mu +\beta^3 p_\mu \bigl(-\textbf{p}^2 +\frac{2}{3} p_\mu^2\bigr) + \beta p_\mu + \frac{\beta^3}{\gamma} p_\mu \bigl(-\textbf{p}^2+ \frac{2}{3} p_\mu^2\bigr).
\end{align*}
Asking for a pure state retrieval, namely $\textbf{p}^2 =p_\mu^2$, we get
\begin{equation*}
p_\mu(2\beta -1)- \beta^3  p_\mu \biggl( 1+\frac{1}{\gamma}\biggr)\frac{1}{3}p_\mu^2 =0.
\end{equation*}
Replacing $\beta= 1/T$, the non trivial solutions are
\begin{equation}
p_\mu=\pm \sqrt{\frac{3(T-2)}{1+\displaystyle\frac{1}{\gamma}}},
 \end{equation}
i.e. the critical temperature  is $T_c=2$. Moreover, one can see that $p_\mu$ scales as $\sqrt{T-2}$ and therefore, the critical exponent turns out to be $1/2$, which suggests that the phase transition is of the second order.


 \section{Solution of the multi-species Hopfield model}  \label{sec:solution}

We investigate the model introduced in the previous section focusing on the low-load regime, namely assuming $\lim_{N\rightarrow \infty} P/N =0$. As underlined in Section \ref{sec:model}, the Hamiltonian (\ref{bubu}) under study is a quadratic form with non-definite sign. However, we can ``adjust'' it by inserting a suitable, additional term that enhances the diagonal entries of the matrix $\mathbf{J}$ (\ref{gei}), in such a way that the contributions due to intra-group interactions are prevailing and the matrix gets positive definite.
Being $c$ a real, strictly positive parameter, we define
\begin{equation}
\textbf{J}^c= c \ \textbf{diag}(k_1\alpha_1^2, .. . . , k_\nu\alpha_\nu^2) -\textbf{J}.
\label{matj}
 \end{equation} 
Thus it holds
\begin{lemma} 
As soon as $c>1+\displaystyle\frac{\nu-1}{k^*}$, where $k^*=\min\bigl\{{k_a}\bigr\}_{a=1}^{\nu}$,  $\textbf{J}^c$ is positive definite.
Then, the matrix {\normalfont{$\textbf{J}^c$}} can be written as {\normalfont{$\textbf{J}^c=\textbf{P}^T\textbf{P}$}}, where {\normalfont{$\textbf{P}$}} is the matrix whose rows are given by the normalized eigenvectors of {\normalfont{$\textbf{J}^c$}}.
\label{lemma1}
\end{lemma}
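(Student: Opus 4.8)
The plan is to establish positive definiteness of $\mathbf{J}^c$ by a direct analysis of its associated quadratic form, and then to obtain the factorization from the spectral theorem. First I would read off the entries of $\mathbf{J}^c$ from (\ref{gei}) and (\ref{matj}): the diagonal equals $(c-1)k_a\alpha_a^2$ while the off-diagonal entries ($a\neq b$) equal $-\alpha_a\alpha_b$. For $\mathbf{x}\in\R^\nu$ I would change variables to $y_a:=\alpha_a x_a$ (a bijection, since every $\alpha_a\in(0,1)$) and use the identity $\sum_{a\neq b} y_a y_b=(\sum_a y_a)^2-\sum_a y_a^2$ to obtain
\[
\mathbf{x}^T\mathbf{J}^c\mathbf{x}=\sum_{a=1}^\nu w_a\, y_a^2-\Big(\sum_{a=1}^\nu y_a\Big)^2,\qquad w_a:=(c-1)k_a+1.
\]
Since $c>1$, every $w_a>0$, so the claim reduces to showing that the diagonal form $\mathrm{diag}(w_a)$ minus the rank-one term $\mathbf{1}\mathbf{1}^T$ is positive definite in the $y$ variables.

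For the positivity I would bound the rank-one term by Cauchy--Schwarz, $(\sum_a y_a)^2=(\sum_a w_a^{-1/2}w_a^{1/2}y_a)^2\leq(\sum_a w_a^{-1})(\sum_a w_a y_a^2)$, whence
\[
\mathbf{x}^T\mathbf{J}^c\mathbf{x}\geq\Big(1-\sum_{a=1}^\nu \frac{1}{w_a}\Big)\sum_{a=1}^\nu w_a\, y_a^2.
\]
Because $k_a\geq k^*$ one has $w_a\geq (c-1)k^*+1$, hence $\sum_a w_a^{-1}\leq \nu/((c-1)k^*+1)$; the hypothesis $c>1+(\nu-1)/k^*$ is equivalent to $(c-1)k^*+1>\nu$ and therefore forces $\sum_a w_a^{-1}<1$. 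The prefactor above is then strictly positive, and since $\sum_a w_a y_a^2>0$ for every $\mathbf{y}\neq 0$, the form is strictly positive on $\mathbf{x}\neq 0$, i.e. $\mathbf{J}^c\succ 0$. (Equivalently, one may invoke the Schur-complement / Sherman--Morrison criterion that $\mathrm{diag}(w_a)-\mathbf{1}\mathbf{1}^T\succ 0$ iff $\mathbf{1}^T\mathrm{diag}(w_a)^{-1}\mathbf{1}<1$, which reproduces the same threshold.)

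Having shown $\mathbf{J}^c\succ 0$, the factorization is routine: $\mathbf{J}^c$ is real symmetric, so by the spectral theorem $\mathbf{J}^c=\mathbf{O}^T\mathbf{D}\,\mathbf{O}$ with $\mathbf{O}$ orthogonal (its rows the normalized eigenvectors) and $\mathbf{D}=\mathrm{diag}(\lambda_1,\dots,\lambda_\nu)$; positive definiteness forces all $\lambda_i>0$, so $\mathbf{D}^{1/2}$ is real and $\mathbf{J}^c=\mathbf{P}^T\mathbf{P}$ with $\mathbf{P}:=\mathbf{D}^{1/2}\mathbf{O}$, i.e. the normalized eigenvectors rescaled by the factors $\sqrt{\lambda_i}$. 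I expect the only genuinely delicate point to be tracking strictness through the Cauchy--Schwarz step, so that the sharp threshold $c>1+(\nu-1)/k^*$ emerges rather than a merely non-strict bound; the rest is bookkeeping.
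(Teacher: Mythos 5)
Your proof is correct, but it takes a genuinely different route from the paper's. The paper proves positive definiteness by comparison with the homogeneous matrix $\tilde{\textbf{J}}^c= c\,\textbf{diag}(k^*\alpha_1^2,\dots,k^*\alpha_\nu^2)-\textbf{J}$: it writes $\tilde{\textbf{J}}^c=\textbf{A}\tilde{\textbf{T}}^c\textbf{A}$ with $\textbf{A}=\textbf{diag}(\alpha_1,\dots,\alpha_\nu)$, diagonalizes $\tilde{\textbf{T}}^c$ explicitly (eigenvalue $(c-1)k^*+1-\nu$ on the span of $(1,\dots,1)$ and $(c-1)k^*+1$ on its orthogonal complement), and then observes that $\textbf{J}^c-\tilde{\textbf{J}}^c$ is diagonal with nonnegative entries $(c-1)(k_a-k^*)\alpha_a^2$, so $\textbf{J}^c\succeq\tilde{\textbf{J}}^c\succ0$. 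You instead work directly with the heterogeneous quadratic form, absorb the $\alpha$'s by the substitution $y_a=\alpha_a x_a$ (the same congruence trick the paper uses), and control the rank-one off-diagonal part $(\sum_a y_a)^2$ by Cauchy--Schwarz against $\textbf{diag}(w_a)$ with $w_a=(c-1)k_a+1$. Both arguments are sound; yours buys the sharper, genuinely heterogeneous criterion $\sum_a w_a^{-1}<1$ (the exact Sherman--Morrison threshold, of which the paper's condition $(c-1)k^*+1>\nu$ is the uniform worst case), while the paper's buys an explicit factorizing matrix $\tilde{\textbf{P}}$ with closed-form rows for the comparison matrix. One point in your favour worth noting: the lemma as stated says the rows of $\textbf{P}$ are the \emph{normalized} eigenvectors, which as written would give $\textbf{P}^T\textbf{P}=\mathbb{I}$; your explicit $\textbf{P}=\textbf{D}^{1/2}\textbf{O}$, with rows rescaled by $\sqrt{\lambda_i}$, is the correct reading (and is what the paper's own $\tilde{\textbf{P}}$ actually does, since its rows $\textbf{v}^a$ carry the factors $\sqrt{k^*(c-1)+1}$ etc.). Your worry about tracking strictness is unfounded: since $w_a\geq(c-1)k^*+1>\nu$ gives $\sum_a w_a^{-1}<1$ strictly, and $\sum_a w_a y_a^2>0$ for $\textbf{y}\neq0$, strict positivity follows with no delicacy.
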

\par\smallskip\noindent
The proof of this lemma can be found in Appendix A.
\par\smallskip\noindent

\begin{proof}{of Theorem \ref{THM}}.
\par\medskip\noindent Let $c>1+ \displaystyle\frac{\nu-1}{k^*}$ and let us transform the order parameters $\textbf{m}$ according to the  rotation defined by $\textbf{P}$  as in Lemma (\ref{lemma1}), namely $\textbf{m}^\prime =\textbf{P}\textbf{m}$, or, more explicitly, ${m_a^{\mu}}^\prime = \sum_{k=1}^{\nu} P_{ak} m_k^{\mu}$.
Following \cite{genovese2016non, guerra-HJ, HJ-Jstat,barra2013mean, genovese2009mechanical} let us also introduce the interpolating function $\Phi_{N, \nu}(t,\textbf{x})$, where the variables $t \in \mathbb{R}^+$ and $\textbf{x} \in \mathbb{R}^{\nu\times P}$ are meant, respectively, as generalized time and space:
\begin{align}
\nonumber
\Phi_{N, \nu}(t,\textbf{x}) & = \frac{1}{N}\ln\displaystyle\sum_{\boldsymbol{\sigma}}\exp \Bigg\{ \frac {Nt} 2\biggl(\displaystyle\sum_{\mu=1}^P\sum_{a=1}^{\nu}k_a\alpha_a^2 (m_a^\mu)^2+ \displaystyle\sum_{\mu=1}^P\sum_{a\neq b}^{\nu,\nu} \alpha_a\alpha_b m_a^\mu m_b^\mu  \biggl) 
\\ \label{morda2}
& \ \ \ \ 
+N\frac{( \beta-t)c}{2}\displaystyle\sum_{\mu=1}^P\sum_{a=1}^{\nu} k_a\alpha_a^2 m_a^2+ N \displaystyle\sum_{\mu=1}^P\sum_{a=1}^{\nu} x_a^\mu {m_a^\mu}^\prime\Bigg\}.
\end{align}
In order to lighten the notation we also introduce the interpolating \textit{Boltzmann factor} $B_{N,\nu}(t,\textbf{x})$ such that $\Phi_{N, \nu}(t,\textbf{x})=\frac{1}{N}\ln \sum_{\boldsymbol{\s}}B_{N,\nu}(t,\textbf{x})$ and 
the average performed with respect to $B_{N,\nu}(t,\textbf{x})$ shall be denoted as $\langle \cdot \rangle_{(t,\mathbf{x})}$.
The interpolating partition function $Z_{N,\nu}(t,\mathbf{x})$ is defined analogously.
Notice that, when $t=\beta$ and $\textbf{x}(\mathbf{h})= \beta  (\mathbf{P}^T)^{-1}  \mathbf{A} \mathbf{h}$, we recover the original model described by (\ref{bubu3}), i.e. 
\begin{equation}
f_{\nu}(\beta, \boldsymbol{\alpha}, \boldsymbol{k}, \mathbf{h}) = \lim_{N \rightarrow \infty} \Phi_{N,\nu}(\beta,\beta (\mathbf{P}^T)^{-1}\mathbf{A} \mathbf{h}), 
\end{equation}
and the average $\langle \cdot \rangle_{(t,\mathbf{x})}$ recovers the expectation over the standard Boltzmann-Gibbs measure.
\par\noindent 
We now show that $\Phi_{N,\nu}(t,\textbf{x})$ fulfills an Hamilton-Jacobi equation; to this task we first evaluate the derivate of (\ref{morda2}) with respect to $t$
\begin{align}
\partial_{t}\Phi_{N, \nu}(t,\textbf{x}) &= \frac{1}{2 Z_{N,\nu}(t,\mathbf{x})}\displaystyle\sum_{\boldsymbol{\sigma}} \biggl[ (1-c)\displaystyle\sum_{\mu=1}^P \sum_{a=1}^{\nu} k_a \alpha_a^2{m_a^\mu}^2  +\displaystyle\sum_{\mu=1}^P \sum_{a\neq b}^{\nu,\nu} \alpha_a\alpha_b m_a^\mu m_b^\mu\biggr]  B_{N,\nu}(t,\textbf{x})
\\ 
\label{poten0}
&= -\frac{1}{2}\langle\textbf{m},\textbf{J}^c\textbf{m}\rangle  =- \frac{1}{2}\langle\textbf{P}\textbf{m},\textbf{P}\textbf{m}\rangle
= -\frac{1}{2}\displaystyle\sum_{a=1}^{\nu}\sum_{\mu=1}^P \langle({m_a^{\mu}}^{\prime})^2\rangle,
\end{align}
and the derivative with respect to $x_{a}^{\mu}$ , $a=1, \dots , \nu$: 
\begin{equation}
\partial_{x_a^{\mu}}\Phi_{N,\nu}(t,\mathbf{x}) =\frac{1}{Z_{N,\nu}(t,\mathbf{x})}\displaystyle\sum_{\boldsymbol{\sigma}} m_a^{\mu\prime} B_{N,\nu}(t,\textbf{x})=  \displaystyle\langle{m_a^{\mu}}^\prime\rangle .
\label{der}
\end{equation}
Combining (\ref{poten0})-(\ref{poten}), we see that the following Hamilton-Jacobi equation for $\Phi_{N, \nu}(t,\textbf{x})$ holds by construction
\begin{equation}
 \partial_{t}\Phi_{N,\nu}(t,\textbf{x}) +\frac{1}{2}{\mid\nabla\Phi_{N, \nu}(t,\textbf{x})\mid}^2 + V_N(t,\textbf{x}) =0 .
\label{Hjag}
\end{equation}
where we have introduced the potential $V_N(t,\textbf{x})$ as the sum of fluctuations of the rotated magnetizations with respect to the average $\langle \ \cdot \ \rangle_{(t,\mathbf{x})}$ as
\begin{equation}
V_N(t,\textbf{x}) = \frac{1}{2N}\displaystyle\Delta \Phi_{N,\nu}(t,\textbf{x}) = \frac{1}{2}\displaystyle\sum_{a=1}^{\nu}\sum_{\mu=1}^P [\langle({{m_a^{\mu}}^\prime})^2\rangle - {\langle{m_a^{\mu}}^\prime\rangle}^2 ].
\label{poten}
\end{equation}
Otherwise stated, $\Phi_{N, \nu}(t,\textbf{x})$ plays as the action potential for a particle moving in $\mathbb{R}^{\nu \times P}$ in the presence of a potential $V_N(t,\textbf{x})$. In this mechanical analogy, $\partial_{x_a^{\mu}}\Phi_{N, \nu}(x,t)$ plays as the velocity of the particle.
\par\noindent As $N\to\infty$, the solution of this PDE can be shown to approach the unique viscosity solution
of the free Hamilton-Jacobi equation, given by the Hopf-Lax formula (see for instance 
\cite{evans1998partial,lax})
\begin{equation}
\Phi_{\nu}(t,\textbf{x}): = \lim_{N \rightarrow \infty} \Phi_{N,\nu}(t,\textbf{x}) =\displaystyle\min_{\textbf{y}\in \mathbb{R}^{\nu\times P}} \biggl( -\frac{(\textbf{x}-\textbf{y})^2}{2t} +\Phi_{N,\nu}(0,\textbf{y})\biggr).
\end{equation}
\par\noindent Now, since as $N \rightarrow \infty$ the motion is free, we can relate the position $\mathbf{y}$ at time $t=0$ to the position $\mathbf{x}$ at the arbitrary time $t$ by $\textbf{y}=\textbf{x}-t \ \textbf{m}^\prime$, that is  
\begin{equation}
\Phi_{\nu}(t,\textbf{x})=\displaystyle\min_{{\textbf{m}^\prime}\in \mathbb{R}^{\nu\times P}} \biggl( -\frac{t(\textbf{m}^\prime)^2}{2} +\Phi_{N,\nu}(0,\textbf{x} - t \ \textbf{m}^\prime)\biggr).
\label{variaz0}
\end{equation}
\par\medskip\noindent
We can calculate explicitly $\Phi_{N,\nu}(0,\textbf{y})$ from (\ref{morda2}) as
\begin{align}
\nonumber
\Phi_{N,\nu}(0,\mathbf{y})& = \frac{1}{N}\ln\displaystyle\sum_{\boldsymbol{\sigma}}\exp \Bigg\{ \frac{N}{2} \displaystyle\sum_{a=1}^{\nu} \sum_{\mu=1}^P \  c\beta k_a \alpha_a^2 \  (m_a^{\mu})^2 +N \displaystyle\sum_{a=1}^{\nu}\sum_{\mu=1}^P y^\mu_a (\textbf{P}\textbf{m}^\mu )_a \Bigg\}\\
\label{combi}
&=\displaystyle\sum_{a=1}^{\nu}\frac{ \alpha_{a} }{N_{a}}\ln\sum_{\boldsymbol{\sigma}_{a}} \exp \left\{ \frac{N_a}{2} \displaystyle\sum_{\mu=1}^P \  c\beta k_a\alpha_a \  (m_a^{\mu})^2 + N_a\displaystyle\sum_{\mu=1}^P  (\textbf{A}^{-1}\textbf{P}^T \boldsymbol{y}^\mu)_a  m^\mu_a\right\},
\end{align}
%
%
where in the last line 
we highlighted the contribution stemming from each group. In fact, (\ref{combi}) turns out to be the linear combination of the  pressures of standard Hopfield models of different sizes,  at different temperatures and different external fields. This can be seen recalling (\ref{PF}) and (\ref{effe}). 
In particular, in this configuration, the $a$-th term corresponds to a Hopfield model with inverse temperature $\beta  c k_a\alpha_a$ and external field $((\textbf{A})^{-1}\textbf{P}^T\textbf{y})_a$ 
Thus, in the thermodynamic limit, it holds
 \begin{align} \label{dato0}
 \Phi_{\nu}(0,\textbf{x} - t \textbf{m}^\prime)&=\displaystyle\sum_{a=1}^\nu\alpha_a f(  c \beta k_a \alpha_a \ , (\textbf{A}^{-1}\textbf{P}^T\textbf{x})_a  -t(\textbf{A}^{-1}\textbf{P}^T{\textbf{m}}^\prime)_a).
\end{align}
Plugging (\ref{dato0}) into (\ref{variaz0}), we get an expression for  $\Phi_{\nu}(t,\boldsymbol{x})$ and, setting $t=\beta$ and $\textbf{x}(\textbf{h}) =\beta  (\textbf{P}^T)^{-1} \textbf{A}\textbf{h}$, we get the variational principle for the free energy ($\ref{freedef}$)  
 \begin{align}
 \label{caccapuzza3}
f_{\nu}(\beta, \boldsymbol{\alpha},\boldsymbol{\xi},h)&= \Phi_{\nu}(\beta, \textbf{x}(\textbf{h}) )
=\displaystyle\sup_{{{\textbf{m}^\prime}}\in \mathbb{R}^{\nu\times P}} \biggl[ \frac{\beta(\textbf{m}^\prime)^2}{2} +\displaystyle\sum_{a=1}^{\nu}\alpha_a f(\beta  c k_a\alpha_a \ , \beta h_a \ - \beta(\textbf{A}^{-1}\textbf{P}^T{\textbf{m}}^\prime)_a\biggr].
\end{align}

Coming back  $\mathbf{m}^{\prime} \rightarrow \mathbf{m}$, we can rewrite (\ref{caccapuzza3}) as
\begin{align}
 \label{caccapuzza22}
 f_{\nu}(\beta, \alpha, h)&=\displaystyle\sup_{{\textbf{m}}\in \mathbb{R}^{\nu\times P}} \biggl( \frac{\beta(\textbf{m}, \textbf{J}^c\textbf{m})}{2} 
  +\displaystyle\sum_{a=1}^{\nu}\alpha_a f(\beta  ck_a\alpha_a \ , \beta h_a \ - (\beta(\textbf{A}^{-1}\textbf{J}^c{\textbf{m}^{\mu}})_a)\biggr).
\end{align}
\par\noindent
as stated in Theorem $\ref{THM}$.

%
By differentiating the argument of (\ref{caccapuzza22}) with respect to $\textbf{m}$ we get
\begin{equation}
\textbf{J}^c(\textbf{m}-\textbf{M})=0\ \   \implies\ \  \textbf{m}=\textbf{M}, 
\label{jottavia}
\end{equation}
since $\textbf{J}^c$ is positive definite and where
\begin{align}
\textbf{M}_a &=\nabla_{\boldsymbol{x}}f(\beta c k_a\a_a, \boldsymbol{x}) \mid_{\textbf{x}=\beta \textbf{h}_a - \beta(\textbf{A}^{-1}\textbf{J}^c\textbf{m})_{a}}.
\end{align}
Otherwise stated, $\boldsymbol{M}_a $ is the expected magnetization for a standard Hopfield model with size $N_a$, set at temperature $(\beta c k_a\a_a)^{-1}$ and with external fields $\beta \textbf{h}_a - \beta(\textbf{A}^{-1}\textbf{J}^c\textbf{m})_{a}$.
Recalling the self-consistent equation (\ref{gaiaa}) for the magnetization of the standard Hopfield model we get that $\boldsymbol{M}_a$ is solution of the following equation
\begin{align}
\boldsymbol{M}_a&=\biggl<\boldsymbol{\xi}\tanh\biggl(\beta ck_a\alpha_a \ \boldsymbol{\xi}\cdot\boldsymbol{M}_a+ \b \boldsymbol{\xi}\cdot\boldsymbol{h}_a -\beta\  \boldsymbol{\xi}\cdot \displaystyle(\textbf{A}^{-1}\textbf{J}^c\textbf{m})_{a}\biggr)\biggr>_{\boldsymbol{\xi}}
\nonumber\\
&=\biggl<\boldsymbol{\xi}\tanh\biggl(\beta ck_a\alpha_a \boldsymbol{\xi}\cdot(\boldsymbol{M}_a-\boldsymbol{m}_a)+ \b  \boldsymbol{\xi}\cdot  \displaystyle(\textbf{A}^{-1}\textbf{J}\textbf{m})_{a}   +\b\ \boldsymbol{\xi}\cdot\boldsymbol{h}_a\biggr)\biggr>_{\boldsymbol{\xi}}.
\end{align}
\par\noindent 
Thus from  (\ref{jottavia}) we get $(\ref{fiorellona})$.

\end{proof}


\section{Conclusions} \label{sec:conclusions}

In this work we considered multi-species Hopfield model in which different groups of neurons interact through a Hebbian-like coupling structure with different intensities.  The model is  completely solvable even in the non-convex region, where inter-groups couplings are stronger than those intra-groups. A variational principle for the free energy is introduced, whose solutions define also the possible thermodynamic states, described in terms of the Mattis magnetizations. The  strategy generalizes the one introduced in \cite{genovese2016non} and is based on a suitable Hamiltonian convexification, together with an interpolating procedure, that allows  the mapping with a mechanic problem, solved through an effective Hamilton-Jacobi equation.

The introduced model contains both the ingredients of the recent  Deep Boltzmann machines: non-convexity and Hebbian structure. Actually the latter results from the partial marginalization over some groups of neurons in a non-convex structure and is at the root of the Deep Boltzmann Machine ability to store patterns of informations, with different levels of correlations. In fact we have shown two special cases of the model: the BAM, a two layer neural networks with Hebbian interactions, able to store couples of patterns and a 3-layers RBM with hidden gaussian units. 

The analysis of this paper was performed in the  low load regime, i.e. when the Hebbian structure is composed of a sub-extensive (with respect to the system size) number of patterns. This facilitates a lot the analysis because only the ferromagnetic nature of the interaction counts and the  glassy states due to patterns interferences are not thermodynamic. The high load regime is still an open challenge: rigorous results for the standard (convex) Hopfield model lack, since rigorous results for the (non-convex) multi-species SK model are still incomplete.  Non-convexity appears in many problems of optimization in statistical inference, from community detection in networks to low rank matrix factorization 
\cite{rank1proof, lowrankmezard} 
thus an effort in this direction is necessary. 

Anyway one could start using non-rigorous statistical-physics techniques \cite{Ton,MPV} to study how the critical load of the standard Hopfield model changes in terms of the number of groups, their relative sizes,  the shape of the intensity matrix and, not last, the type of the neurons (Boolean, Gaussian, etc.), being the last closely related to the activation function used in practical applications.

\vspace{1cm}

{\bf Acknowledgements}\\
E.A.  acknowledges financial support by Sapienza Universit\`a di Roma (project no. RG11715C7CC31E3D).\\ 
D.T.\ is supported by Scuola Normale Superiore and National Group of Mathematical Physics GNFM-INdAM.

\newpage

\appendix

\section{Proof of Lemma 1}
\noindent Before proving Lemma \ref{lemma1} we define the matrix $\tilde{\textbf{J}}^c$ as
\begin{equation}
\tilde{\textbf{J}}^c= c \ \textbf{diag}(k^*\alpha_1^2, .. . . , k^*\alpha_\nu^2) -\textbf{J},
\label{minimum}
 \end{equation}
where $k^*$ is defined as in Lemma (\ref{lemma1}), then it holds the following
\begin{lemma} 
The quadratic form stemming from (\ref{minimum}) is positive definite if $c>1+\displaystyle\frac{\nu-1}{k^*}$. Moreover, we get {\normalfont{$\tilde{\textbf{J}}^c = \tilde{\textbf{P}}^T\tilde{\textbf{P}}$}}, where the rows of {\normalfont $\tilde{\textbf{P}}$} are the linearly independent vectors given by
\begin{align*}
& \textbf{v}^1= \frac{\sqrt{k^*(c-1)-(\nu-1)}}{\sqrt{\nu}} (\alpha_1, . . . \ , \alpha_\nu)
\\
&\textbf{v}^2= \frac{\sqrt{k^*(c-1)+1}}{\sqrt{2}} (\alpha_1, -\alpha_2, 0, . . . \ , 0)
\\
&  \vdots
\\
&{\textbf{v}}^a= \frac{\sqrt{k^*(c-1)+1}}{\sqrt{a(a-1)}} (\alpha_1, . . .. \ , \alpha_{a-1}, \alpha_{a}-a\alpha_{a}, 0, . . . ,0)
\\
& \vdots
\\
&\textbf{v}^\nu= \frac{\sqrt{k^*(c-1)+1}}{\sqrt{\nu(\nu-1)}} (\alpha_1, . . . \ , \alpha_{\nu-1}, \alpha_{\nu}-\nu\alpha_{\nu}).
\end{align*}
\label{lemma2}
\end{lemma}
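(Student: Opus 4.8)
The plan is to reduce $\tilde{\textbf{J}}^c$ to a rank-one perturbation of a positive diagonal matrix, settle positive definiteness by a weighted change of variables, and then recognise the prescribed vectors $\textbf{v}^1,\dots,\textbf{v}^\nu$ as a rescaled Helmert spectral square root. First I would expand (\ref{minimum}): setting $\lambda:=k^*(c-1)+1$, $D:=\textbf{diag}(\alpha_1^2,\dots,\alpha_\nu^2)$ and $\boldsymbol{\alpha}=(\alpha_1,\dots,\alpha_\nu)^T$, the entries collapse to the closed form
\begin{equation*}
\tilde{\textbf{J}}^c=\lambda D-\boldsymbol{\alpha}\boldsymbol{\alpha}^T,\qquad (\tilde{\textbf{J}}^c)_{ab}=\lambda\,\alpha_a^2\,\delta_{ab}-\alpha_a\alpha_b .
\end{equation*}
This is the key structural observation: $\tilde{\textbf{J}}^c$ is a positive multiple of $D$ minus a single rank-one term, and everything else follows from the spectral theory of such a matrix.

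Next I would prove positive definiteness. For $\textbf{x}\neq\boldsymbol{0}$ put $z_a=\alpha_a x_a$, which is an invertible substitution since every $\alpha_a>0$, so that
\begin{equation*}
\textbf{x}^T\tilde{\textbf{J}}^c\textbf{x}=\lambda\sum_{a=1}^\nu z_a^2-\Bigl(\sum_{a=1}^\nu z_a\Bigr)^2=\textbf{z}^T\bigl(\lambda I-\boldsymbol{1}\boldsymbol{1}^T\bigr)\textbf{z},
\end{equation*}
where $\boldsymbol{1}=(1,\dots,1)^T$. The matrix $\lambda I-\boldsymbol{1}\boldsymbol{1}^T$ carries eigenvalue $\lambda-\nu$ on the line spanned by $\boldsymbol{1}$ and eigenvalue $\lambda$ on its orthogonal complement, so its least eigenvalue is $\lambda-\nu$. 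Hence $\tilde{\textbf{J}}^c$ is positive definite if and only if $\lambda>\nu$, i.e. $k^*(c-1)>\nu-1$, which is exactly the hypothesis $c>1+\frac{\nu-1}{k^*}$; the threshold is sharp, the extremal direction being $x_a\propto 1/\alpha_a$.

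To obtain the factorization I would diagonalise the same perturbation orthogonally. Let $\textbf{w}^1=\frac{1}{\sqrt{\nu}}\boldsymbol{1}$ and, for $a\ge2$, let $\textbf{w}^a=\frac{1}{\sqrt{a(a-1)}}(1,\dots,1,-(a-1),0,\dots,0)$ be the orthonormal Helmert contrasts (the $-(a-1)$ sitting in position $a$); these are eigenvectors of $\lambda I-\boldsymbol{1}\boldsymbol{1}^T$ with $\mu_1=\lambda-\nu$ and $\mu_a=\lambda$ for $a\ge2$. From the spectral decomposition $\lambda I-\boldsymbol{1}\boldsymbol{1}^T=\sum_a\mu_a\,\textbf{w}^a(\textbf{w}^a)^T$, conjugating by $D^{1/2}$ gives $\tilde{\textbf{J}}^c=\sum_a\textbf{v}^a(\textbf{v}^a)^T$ with $\textbf{v}^a:=\sqrt{\mu_a}\,D^{1/2}\textbf{w}^a$. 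Writing out $D^{1/2}\textbf{w}^a$ and using $-(a-1)\alpha_a=\alpha_a-a\alpha_a$ reproduces exactly the vectors stated in the Lemma, so taking $\tilde{\textbf{P}}$ to be the matrix whose rows are the $\textbf{v}^a$ yields $\tilde{\textbf{J}}^c=\tilde{\textbf{P}}^T\tilde{\textbf{P}}$; linear independence of the rows follows because the $\textbf{w}^a$ are orthonormal, $D^{1/2}$ is invertible and each $\mu_a>0$ under the hypothesis. As an alternative I could verify the factorization directly, computing $\sum_a v^a_i v^a_j$ and collapsing the telescoping sum $\sum_{a>i}\frac{1}{a(a-1)}=\frac1i-\frac1\nu$, which returns the diagonal $k^*(c-1)\alpha_i^2$ and off-diagonal $-\alpha_i\alpha_j$.

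The main obstacle is bookkeeping rather than conceptual: one must match the explicit prefactors $\sqrt{k^*(c-1)-(\nu-1)}/\sqrt{\nu}$ and $\sqrt{k^*(c-1)+1}/\sqrt{a(a-1)}$ to the eigenvalues $\mu_a$ and confirm the telescoping, and one must observe that $\mu_1=\lambda-\nu>0$ is simultaneously the positivity threshold and the condition making $\textbf{v}^1$ real and nonzero (hence $\tilde{\textbf{P}}$ of full rank). Finally this lemma feeds Lemma \ref{lemma1}: since $\textbf{J}^c-\tilde{\textbf{J}}^c=(c-1)\,\textbf{diag}\bigl((k_a-k^*)\alpha_a^2\bigr)$ is a nonnegative diagonal matrix for $c>1$ and $k_a\ge k^*$, the positive definiteness of $\tilde{\textbf{J}}^c$ transfers at once to $\textbf{J}^c$.
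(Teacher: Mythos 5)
Your proof is correct and follows essentially the same route as the paper's: you isolate the weight matrix $\textbf{A}=D^{1/2}$ (the paper writes $\tilde{\textbf{J}}^c=\textbf{A}\tilde{\textbf{T}}^c\textbf{A}$, you substitute $z_a=\alpha_a x_a$), reduce to the rank-one perturbation $\lambda I-\boldsymbol{1}\boldsymbol{1}^T$ with eigenvalues $\lambda-\nu$ and $\lambda$, and assemble $\tilde{\textbf{P}}$ from the same Helmert-type eigenvectors scaled by the square roots of the eigenvalues. Your write-up is in fact slightly cleaner in placing the factor $\sqrt{\mu_a}$ uniformly on every eigenvector, and the sharpness remark ($\lambda>\nu$ being necessary as well as sufficient) is a correct bonus not claimed in the lemma.
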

\begin{proof}
Let $\textbf{A}=\textbf{diag}(\alpha_1, \dots , \alpha_{\nu})$
and  define
\begin{equation}
\tilde{\textbf{T}}_{ab}^c :=\begin{cases}
                & (c-1)k^* \; \; \; \text{se } \; \; a=b
                \\
                \\
                &  \  -1 \; \; \;  \ \ \  \ \  \text{se } \; \; a\neq b.
                \end{cases}
                \label{tc}
\end{equation}
such that we can write  $\tilde{\textbf{J}}^c=\textbf{A}\tilde{\textbf{T}}\textbf{A}$
We now derive the eigenvectors of $\tilde{\textbf{T}}^c$. In order for \textbf{w} to be eigenvector with eigenvalue $\lambda$ the following homogeneous system must be fulfilled
\begin{equation}
(\tilde{\textbf{T}}^c-\lambda\mathbb{I}_\nu)\textbf{w}=0.
\label{secolare}
\end{equation}
\par\noindent 
Focusing on the $i$-th component 
\begin{equation*}
((\tilde{\textbf{T}}^c-\lambda\mathbb{I}_\nu)\textbf{w})_{i}= -\displaystyle\sum_{a\neq i} w_{a} + ((c-1)k^*-\lambda)w_{i}= - \displaystyle\sum_{a=1}^{\nu} w_{a} + ((c-1)k^*-\lambda+1) w_{i}.
\end{equation*}
If $\displaystyle\sum_{a=1}^{\nu} w_{a}\neq 0$ then $\textbf{w}^1={\nu}^{-1/2}(1, \dots , 1)$  is eigenvector with eigenvalue $\lambda= (c-1)k^*+1-\nu$, being positive if  $c>1+(\nu-1)/k^*$.
\newline
The remaining eigenvectors live in the subspace of dimension $\nu-1$, characterized by $\displaystyle\sum_{a=1}^{\nu} w_{a}=0$ and  are related to the same eigenvalue $\lambda=(c-1)k^*+1$.
Normalizing, we get that the base of eigenvectors of the subspace orthogonal to the subspace $\mathbb{R} \textbf{w}^1$ is
\begin{align*}
& \textbf{w}^2=\frac{\sqrt{k^*(c-1)+1}}{\sqrt{2}}(1,-1, 0, \dots , 0)
\\
& \vdots
\\
&\textbf{w}^a=\frac{\sqrt{k^*(c-1)+1}}{\sqrt{a(a-1)}}(1,\dots , 1, 1-a, 0 , \dots , 0)
\\
& \vdots
\\
&\textbf{w}^a=\frac{\sqrt{k^*(c-1)+1}}{\sqrt{\nu(\nu-1)}}(1,\dots , 1, 1-\nu).
\end{align*}
Let ${\textbf{P}}^\prime$ the matrix of dimension $\nu \times \nu$, whose rows are the vectors $\textbf{w}_a$, $a=1, \dots, \nu$.
The matrix $\textbf{P}^\prime$ is such that ${\textbf{P}^\prime}^T\textbf{P}^\prime={\tilde{\textbf{T}}}^c$, therefore if we pose $\tilde{\textbf{P}}=\textbf{P}^\prime\textbf{A}$
we get
\begin{equation*}
\tilde{\textbf{P}}^T\tilde{\textbf{P}}= (\textbf{P}^\prime\textbf{A})^T(\textbf{P}^\prime\textbf{A})=\textbf{A}{\textbf{P}^\prime}^T\textbf{P}^\prime\textbf{A}= \textbf{A}{\tilde{\textbf{T}}}^c \textbf{A}=\tilde{\textbf{J}}^c.
\end{equation*}
Finally, we notice that the rows of $\tilde{\textbf{P}}$ are 
\begin{align*}
& \textbf{v}^1= \frac{\sqrt{(c-1)k^*-\nu+1}}{\sqrt{\nu}} (\alpha_1, . . . \ , \alpha_\nu)
\\
&\textbf{v}^2= \frac{\sqrt{k^*(c-1)+1}}{\sqrt{2}} (\alpha_1, -\alpha_2, 0, . . . \ , 0)
\\
& \vdots
\\
&{\textbf{v}}^a= \frac{\sqrt{k^*(c-1)+1}}{\sqrt{a(a-1)}} (\alpha_1, . . . \ , \alpha_{a-1}, \alpha_{a}-a\alpha_{a}, 0, . . . ,0)
\\
& \vdots
\\
&\textbf{v}^\nu= \frac{\sqrt{k^*(c-1)+1}}{\sqrt{\nu(\nu-1)}} (\alpha_1, . . . \ , \alpha_{\nu-1}, \alpha_{\nu}-\nu\alpha_{\nu}).
\end{align*}
\end{proof}
\par\noindent
We are now ready to prove Lemma (\ref{lemma1}).
\begin{proof}
We want to prove that if the condition $c>1+\displaystyle\frac{\nu-1}{k^*}$ holds, then (\ref{matj}) is positive definite.
Along the proof we will exploit the fact that (\ref{minimum}) is positive definite and we shall consider the diagonal matrix given by the difference between (\ref{matj}) and (\ref{minimum}):
\begin{equation}
\textbf{J}^c-\tilde{\textbf{J}}^c =\begin{cases}
                                                                                                  & (c-1)(k_a-k^*)\alpha_a^2 \; \; \; \text{se } \; \; a=b
                                                                                                   \\
                                                                                                   \\
                                                                                                  &  0\; \; \;\;\;\;  \ \ \  \ \  \text{se } \; \; a\neq b.      
                                                                                                   \end{cases}
\end{equation}
If $c>1+\displaystyle\frac{\nu-1}{k^*}$ then $\textbf{J}^c-\tilde{\textbf{J}}^c >0$. If $\textbf{u}\in\mathbb{R}^\nu$ is an arbitrary vector
\begin{equation}
\textbf{u}^T \bigl(\textbf{J}^c-\tilde{\textbf{J}}^c\bigr) \textbf{u}\ge 0
\end{equation} 
from which we get 
\begin{equation}
\textbf{u}^T \textbf{J}^c\textbf{u}> \textbf{u}^T\tilde{\textbf{J}}^c\textbf{u}> 0.
\end{equation} 
The last inequality is ensured by Lemma (\ref{lemma2}) and this concludes the proof.
\end{proof}

\end{document}